\documentclass[]{siamart250211}

\usepackage{amsmath,amssymb,bm}
\usepackage{array}
\usepackage{booktabs}
\usepackage{graphicx}
\usepackage{tikz}
\usetikzlibrary{arrows.meta,positioning,fit,backgrounds}
\usepackage{algorithm}
\usepackage{algpseudocode}
\usepackage{hyperref}
\usepackage{multirow}
\usepackage{enumitem}
\usepackage{float}
\usepackage{placeins}

\newsiamremark{remark}{Remark}
\newsiamremark{heuristic}{Heuristic}
\newsiamthm{property}{Property}

\headers{BINN for Bosonic Ground States}{Z.~Kong, J.~Z.~Yang, C.~Yuan, and X.~Zhao}

\title{A Boson-informed neural network for Ground States of
       Many-Body Schr\"{o}dinger equations %
       \thanks{Submitted to the editors \today.
       \funding{J.~Z.~Yang is supported by NSFC (No.~12125103,
       No.~U24A2002) and Hubei Natural Science Foundation
       (No.~2025AFA002); X.~Zhao is supported by the National Key
       Research and Development Program of China
       (No.~2024YFE03240400), the National MCF Energy R\&D Program,
       and NSFC (No.~42450275, No.~12271413); C.~Yuan is supported by
       NSFC (No.~12301558).}}}

\author{Zhizhong Kong\thanks{School of Mathematics and Statistics,
  Wuhan University, Wuhan, 430072, Hubei, China
  (\email{zhizhongkong@whu.edu.cn}).}
\and
Jerry Zhijian Yang\thanks{School of Mathematics and Statistics,
  Wuhan University, Wuhan, 430072, Hubei, China; and
  Computational Sciences Hubei Key Laboratory, Wuhan University,
  Wuhan, 430072, Hubei, China
  (\email{zjyang.math@whu.edu.cn}).}
\and
Cheng Yuan\thanks{School of Artificial Intelligence, Wuhan University, Wuhan, 430072, Hubei, China; and
  Computational Sciences Hubei Key Laboratory, Wuhan University,
  Wuhan, 430072, Hubei, China
  (\email{yuancheng@whu.edu.cn}).}
\and
Xiaofei Zhao\thanks{School of Mathematics and Statistics,
  Wuhan University, Wuhan, 430072, Hubei, China; and
  Computational Sciences Hubei Key Laboratory, Wuhan University,
  Wuhan, 430072, Hubei, China
  (\email{matzhxf@whu.edu.cn}).}}

\begin{document}
\maketitle

\begin{abstract}
We propose a symmetry-preserving Boson-informed neural network (BINN) for the variational Monte Carlo computation of interacting-boson ground states in continuous space. The framework explicitly separates the two dominant physical scales of the system. At the microscopic interaction scale, a cell Jastrow factor absorbs the leading near-contact stiffness, while at the macroscopic scale, a potential-aware envelope encodes the external confinement, leaving the neural network to capture a smoother many-body residual. Across both two- and three-dimensional tests, the BINN approaches the Hartree and Gross-Pitaevskii limits with an empirical $\mathcal{O}(N^{-1})$ finite-size trend in both energy and density. The energies agree with diffusion Monte Carlo to within $0.06\%$ for the particle number tested up to $350$ in the three-dimensional shrinking-range case. Applications are made under various trapping scenarios, justifying robust performance of BINN.
\end{abstract}

\begin{keywords}
 many-body Schr\"{o}dinger equation, interacting bosons, ground state,
variational Monte Carlo, neural network, shrinking-range regime
\end{keywords}

\begin{MSCcodes}
65C05, 81Q05, 35Q55, 68T07
\end{MSCcodes}

\section{Introduction}
\label{sec:intro}

The macroscopic world is governed by the collective behavior of microscopic particles of two fundamental species: fermions and bosons, and this paper focuses on computations of the latter. Not obeying the Pauli exclusion principle, multiple bosons can occupy the same quantum state, leading to permutation-symmetry in the wave function. Mathematically, the (dimensionless) Hamiltonian for describing $N$ bosons trapped in a $d$-dimensional ($d=1,2,3$) external potential $v_{\rm ext}$ reads \cite{lieb2005bosons}
\begin{equation}\label{H original}
  \hat H_N = \sum_{i=1}^{N}\Bigl[-\tfrac12\nabla_i^{2} + v_{\rm ext}(\bm r_i)\Bigr]
           + \sum_{1\leq i<j\leq N} v_{\rm int}(\bm r_i-\bm r_j),
\end{equation}
where $\bm r_i=(r_{i,1},...,r_{i,d})\in\mathbb R^d$ is the position of the $i$-th boson 
and $v_{\rm int}$, commonly modeled \cite{bloch2008many} by a Dirac delta contact potential and practically regularized as a Gaussian 
mollifier \cite{pescia2022neural} 
\begin{equation}\label{eq:pair_pot}
  v_{\rm int}(\bm r)\equiv v_\sigma(\bm r):=gW_\sigma(\bm r),
  \qquad
  W_\sigma(\bm r):=\frac{\mathrm{e}^{-|\bm r|^2/\sigma^2}}
  {(\sigma\sqrt{\pi})^d},
\end{equation}
describes the short-range pairwise interaction between particles. 
Here, \(W_\sigma\rightharpoonup\delta^{(d)}\) as \(\sigma\to0\) with \(\sigma>0\) the interaction range, and \(g>0\) denotes the repulsive coupling strength.
The Hamiltonian $\hat H_N$ acts on the wave functions
$\Psi(\bm R)$ of the configuration
$\bm R=(\bm r_1,\ldots,\bm r_N)\in\mathbb R^{dN}$, where 
$|\Psi(\bm R)|^2$ is the probability density for observing 
$\bm R$, hence $\|\Psi\|_{L^2}=1$. At extremely low or zero temperature, the equilibrium properties of such a system are governed by its many-body ground state (GS), the lowest-energy quantum state giving 
\begin{equation}\label{eq:E0_intro}
  E_0 = \min_{\substack{\Psi\in H^1_{\rm sym}(\mathbb R^{dN})\\
                        \|\Psi\|_{L^2}=1}}
        \langle \Psi | \hat H_N | \Psi\rangle.
\end{equation}
Hereafter, we define
\[
  \begin{aligned}
    \langle \Psi | \hat H_N | \Psi\rangle
    ={}&
    \frac{1}{2}\sum_{i=1}^{N}
    \int_{\mathbb{R}^{dN}}
    \left|\nabla_i \Psi(\bm R)\right|^2\,\mathrm{d}\bm R
    \\
    &+
    \int_{\mathbb{R}^{dN}}
    \left[
      \sum_{i=1}^{N}v_{\mathrm{ext}}(\bm r_i)
      +
      \sum_{1\leq i<j\leq N}v_{\mathrm{int}}(\bm r_i-\bm r_j)
    \right]
    |\Psi(\bm R)|^2\,\mathrm{d}\bm R,
  \end{aligned}
\]  
and denote by
$H^1_{\rm sym}(\mathbb R^{dN})$ the permutation-symmetric
subspace of $H^1(\mathbb R^{dN})$, meaning that 
$\Psi(\bm r_{\pi(1)},\ldots,\bm r_{\pi(N)}) = \Psi(\bm r_1,\ldots,\bm r_N)$
for all $\pi\in S_N$ with 
$S_N$ the symmetric group of order $N!$. 
The high dimensionality of the configuration space for large \(N\)
and the stiffness caused by short-range interactions for small
\(\sigma\) make the direct computation of the wave function
challenging, which is the focus of this work.

Due to these obstacles, 
traditional methods on \eqref{eq:E0_intro} face a persistent tradeoff
between scalability and microscopic resolution. For instance, 
the exact diagonalization~\cite{lanczos1950iteration} and the tensor-network
methods~\cite{white1992density} are limited by the rapid growth of
the needed basis and the bond dimension, respectively, confining them
to a few particles or quasi-one-dimensional geometries. Although Diffusion Monte Carlo (DMC)~\cite{foulkes2001quantum,reynolds1982fixed} evades the grid explosion and provides near-exact benchmarks for bosons, its efficiency hinges on a well-crafted guiding function, which is typically obtained via a prior variational Monte Carlo (VMC) \cite{ceperley1977monte,mcmillan1965ground} calculation. 
Finite-temperature path-integral Monte Carlo scales to very large trapped
ensembles~\cite{krauth1996quantum}, but it targets thermal equilibrium rather than the zero-temperature wave function. 
Meanwhile, Gross–Pitaevskii (GP) theory~\cite{gross1961structure,lieb2005bosons,lieb2000derivation,pitaevskii1961vortex} reduces the $dN$-dimensional many-body problem to a nonlinear model for a single mean-field condensate orbital, which can be solved efficiently through diverse computational methods~\cite{altmann2021jmethod,antoine2017efficient,BaoCai,bao2025computing,danaila2010new,kong2026rotating}. It therefore serves as an essential asymptotic reference in the weakly interacting regime, but inherently lacks the explicit two-body correlations required to capture correlation energies and short-range pair structures.

As a successful application of deep neural networks for high-dimensional problems, neural quantum states (NQS)~\cite{carleo2017solving} that represent the full many-body wave function with neural networks under the traditional VMC framework have emerged as a promising paradigm for capturing GS, intensively developed in recent years for fermions  \cite{hermann2020deep,kim2024fermi,pfau2020ab}. 
For interacting bosons, the development of NQS 
remains more limited. 
It is challenging for a network to 
be effective as the interaction range decreases or the scale increases, especially in shrinking-range regimes \cite{lieb2005bosons} where the interaction length $\sigma$
is inversely proportional to $N$.
\cite{pescia2022neural} using DeepSets~\cite{zaheer2017deep} is restricted to periodic one- and
two-dimensional gases with at most $N=64$, and \cite{bedaque2024neural} resolves one-dimensional systems up to
$N=50$; 
Freitas and Vitiello~\cite{freitas2023bosenet} reach DMC accuracy 
for untrapped three-dimensional $^{4}$He clusters with $N\leq14$; and \cite{hul2026grand} accommodates variable particle numbers and treats two-dimensional trapped gases with $N\leq17$.
In summary, to our knowledge, existing continuous-space neural wave
functions have not yet demonstrated DMC-level accuracy for
trapped bosons in both two and three dimensions at the
$\mathcal{O}(10^2)$-particle scale. 
Also, it is desirable in applications for an
architecture to transfer robustly across confinement potentials.

To address these challenges in solving the GS problem \eqref{eq:E0_intro}, we construct the Boson-informed neural network (BINN). Our main contributions are as follows.
\begin{itemize}
\item The BINN is designed with a universal architecture based on a proposed Symmetry-Jastrow factorization for trapped interacting bosons, preserving permutation symmetry. The factorization gives rise to a novel network that separates into a one-body branch and a pair branch. The two branches integrate a cell Jastrow factor, which resolves the contact stiffness, and a potential-aware envelope, which captures the macroscopic confinement background, into a pooled structure that allows for closed-form local-energy evaluation.

\item Through numerical experiments, we validate BINN in terms of accuracy, scalability, and asymptotic consistency with theoretical results. The computed GS shows a trend $\mathcal{O}(N^{-1})$ toward mean-field limits in both fixed-range and shrinking-range regimes. The method scales to $N=350$ in the three-dimensional shrinking-range regime ($\sigma\approx0.03$) while maintaining DMC-level accuracy.

\item Necessity of the cell Jastrow factor and the potential-aware envelope is justified by ablation studies. Applications of BINN are made across structured and nonuniform trapping potentials, including ring-shaped, deep-lattice, double-well, and box traps.
\end{itemize}

The rest of the paper is organized as follows. Section~\ref{sec:background} provides the necessary preliminaries, including VMC. Section~\ref{sec:methods} presents the construction of BINN component
by component and describes its implementation. Numerical experiments and results are presented in Section~\ref{sec:baseline}.
\section{Preliminaries}
\label{sec:background}

This section formulates the problem, introduces the numerical framework underlying our method, and concludes with theoretical benchmarks that serve as diagnostics for the subsequent numerical results.

\subsection{Variational Monte Carlo (VMC) formulation}
\label{sec:vmc}

For the many-body problem of \eqref{H original}, we focus on the cases \(d=2\) and \(d=3\) dimensions (2D, 3D). The external potential
\(v_{\rm ext}\) is assumed confining, i.e.,
\(\lim_{|\bm r|\to\infty}v_{\rm ext}(\bm r)=\infty\), which ensures that the particles remain spatially localized, needed also for theoretical benchmarks later. 

The GS energy $E_0=E_0(N)$ defined by~\eqref{eq:E0_intro} involves the normalization
constraint $\|\Psi\|_{L^2}=1$.
Such a
constraint can be removed for convenience of computations by using equivalently the Rayleigh quotient:  
\begin{equation}\label{eq:rayleigh}
  E_0(N)
  = \min_{\substack{\Psi\in H^1_{\rm sym}(\mathbb R^{dN})\\ \Psi\neq 0}}
    E[\Psi],
  \qquad
  E[\Psi] := \frac{\langle\Psi|\hat H_{N}|\Psi\rangle}
                  {\|\Psi\|_{L^2}^2}.
\end{equation}
Under the setup, the minimum in~\eqref{eq:rayleigh} can be 
attained as theoretically analyzed~\cite{reed1978analysis}, and we denote the normalized minimizer by \(\Psi_0\), which satisfies the many-body
eigenvalue problem
\begin{equation}
  \hat H_{N}\Psi_0 = E_0(N)\Psi_0,\qquad \|\Psi_0\|_{L^2}=1.
  \label{eq:eigenproblem}
\end{equation}

Direct discretization of \eqref{eq:rayleigh}, a $dN$-dimensional
minimization, certainly becomes infeasible when the number of particles $N$ is large. Alternatively, from a Monte Carlo perspective, the energy functional in~\eqref{eq:rayleigh} can be expressed as an expectation~\cite{foulkes2001quantum,mcmillan1965ground}. For a sufficiently regular nonzero \(\Psi\), define the local energy \(E_L\) and the Born distribution \(\rho[\Psi]\) by
\begin{equation}\label{EL def}
  E_L[\Psi](\bm{R}) := \frac{\hat{H}_{N}\Psi(\bm{R})}{\Psi(\bm{R})},
  \quad
  \rho[\Psi](\bm{R}):=\frac{|\Psi(\bm{R})|^2}{\|\Psi\|_{L^2}^2},\quad 
  \bm R=(\bm r_1,\ldots,\bm r_N)\in\mathbb R^{dN},
\end{equation}
then 
\[
  E[\Psi]
  = \frac{\int_{\mathbb{R}^{dN}}\overline{\Psi(\bm{R})}\,
        \hat H_{N}\Psi(\bm{R})\,d\bm{R}}
         {\|\Psi\|_{L^2}^2}
  = \int_{\mathbb{R}^{dN}} \rho[\Psi](\bm{R})\,
        \frac{\hat H_{N}\Psi(\bm{R})}{\Psi(\bm{R})}\,d\bm{R}
  = \mathbb{E}_{\rho[\Psi]}\!\left[E_L[\Psi]\right].
\]
The energy of any state can thus be estimated by drawing configurations
$\bm R$ from $\rho[\Psi]$ and then averaging $E_L$, where the $dN$-dimensional quadrature
is replaced by sampling from a distribution determined by the state itself.
On this basis, VMC restricts the minimization to a
parameterized manifold of trial states $\Psi_\theta$, with parameters $\theta\in\mathbb{R}^P$ for some $P\in\mathbb{N}_+$, 
not necessarily normalized. The minimum of $E[\Psi_\theta]$ over $\theta$ could approach $E_0(N)$ from above, and we will abbreviate
$\rho_\theta:=\rho[\Psi_\theta]$.

It follows from~\eqref{eq:rayleigh} or~\eqref{eq:eigenproblem} that the GS \(\Psi_0\) is defined only up to a global phase and can be chosen strictly positive in the interior of the configuration domain~\cite{simon1982schrodinger}. This motivates the
log-space parameterization for the trial state of VMC \cite{kessler2021artificial}:
\begin{equation}
  \Psi_\theta(\bm{R}) = \mathrm{e}^{f_\theta(\bm{R})},
  \label{eq:logspace}
\end{equation}
where $f_\theta:\mathbb{R}^{dN}\to\mathbb{R}$ will be a neural-network function in our study, 
making $\Psi_\theta>0$ automatic. To compute the local energy, we first define the gradient operator for the $i$-th particle as $\nabla_i := \nabla_{\bm r_i} = \left( \frac{\partial}{\partial r_{i,1}}, \ldots, \frac{\partial}{\partial r_{i,d}} \right)$. The corresponding $dN$-dimensional Laplacian is $\nabla^2 := \sum_i \nabla_i^2$. By applying the differential identity $\nabla_i^2 \mathrm{e}^{f} = \mathrm{e}^{f}(\nabla_i^2 f + \vert{}\nabla_i f\vert{}^2)$, the local energy evaluates directly to
\begin{equation}
  E_L[\Psi_\theta](\bm{R})
  = -\frac{1}{2}\!\bigl(\nabla^2 f_\theta(\bm{R}) + |\nabla f_\theta(\bm{R})|^2\bigr)
    + V_{\rm ext}(\bm{R}) + V_{\rm int}(\bm{R}),
  \label{eq:eloc}
\end{equation}
where $V_{\rm ext}(\bm{R}):=\sum_i v_{\rm ext}(\bm{r}_i)$ and
$V_{\rm int}(\bm{R}):=\sum_{i<j}v_{\rm int}(\bm{r}_i-\bm{r}_j)$.

Substituting the exact GS $\Psi_0$ that satisfies \eqref{eq:eigenproblem} into the definition of $E_L$ in \eqref{EL def} makes the local energy configuration-independent, i.e., $E_L[\Psi_0](\bm{R})\equiv E_0,\,\forall \bm R$, and so $\operatorname{Var}[E_L[\Psi_0]]=0$. Conversely, if a $\Psi_\theta$ within the  manifold~\eqref{eq:logspace}  satisfies 
$\operatorname{Var}[E_L[\Psi_\theta]]=0$, then $\Psi_\theta$ must be the GS. Indeed, the vanishing variance indicates a constant $E_L[\Psi_\theta](\bm R)$ in $\bm R$ and $\hat{H}_N\Psi_\theta(\bm R)= E_L[\Psi_\theta]\Psi_\theta(\bm R)$. The positivity of $\Psi_\theta$ precludes its orthogonality to the positive GS, so $E_L[\Psi_\theta]$ equals $E_0$ making $\Psi_\theta$ the GS. 
 We will thus use the variance as a 
 complementary convergence diagnostic in Section~\ref{sec:ablation}.
In total, the VMC formulation for the GS problem reads:
\begin{equation}\label{VMC def}
  \min_{\substack{\theta\in\mathbb R^{P}\\ \Psi_\theta\in H^1_{\rm sym}\setminus\{0\}}}
    \mathbb{E}_{\rho_\theta}\!\left[E_L[\Psi_\theta]\right].
\end{equation}
It has been used in early works with $f_\theta$ taken as analytic Jastrow functions \cite{mcmillan1965ground}, while recent work has used neural-network trial states~\cite{pescia2022neural}.

\subsection{Optimization method}
\label{sec:sr} 
We will update the parameters $\theta$  for VMC by
using the Stochastic Reconfiguration
(SR)~\cite{amari1998natural,sorella1998green}, a standard
natural-gradient optimizer for neural quantum
states~\cite{hermann2020deep,pfau2020ab,stokes2020quantum,vicentini2022netket} briefly described below. 

For the energy gradient needed to update \(\theta=(\theta_k)_{1\times P}\in\mathbb{R}^P\), the fact 
\(\partial_{\theta_k}\Psi_\theta=(\partial_{\theta_k}f_\theta)\Psi_\theta\) from \eqref{eq:logspace}
and the self-adjointness of \(\hat H_N\) give the identity:  
\begin{align}
 \partial_{\theta_k}\mathbb E_{\rho_\theta}[E_L[\Psi_\theta]]
 &=\frac{2\langle\partial_{\theta_k}\Psi_\theta,
  (\hat H_N-\mathbb E_{\rho_\theta}[E_L[\Psi_\theta]])\Psi_\theta\rangle}
 {\langle\Psi_\theta,\Psi_\theta\rangle} \notag\\
 &=2\,\mathbb E_{\rho_\theta}\!\left[
   \bigl(\partial_{\theta_k}f_\theta-\mathbb E_{\rho_\theta}[\partial_{\theta_k}f_\theta]\bigr)
   \bigl(E_L[\Psi_\theta]
   -\mathbb E_{\rho_\theta}[E_L[\Psi_\theta]]\bigr)\right].
 \label{eq:energy_gradient}
\end{align}
At a fixed $\theta$, the expectations can be evaluated by Monte Carlo, where we adopt the Metropolis-adjusted Langevin algorithm
(MALA)~\cite{roberts1996exponential,roberts1998optimal} to generate from $\rho_\theta$ a batch of configurations 
$\{\bm R^{(s)}\}_{s=1}^B$ for $B\in\mathbb{N}_+$. 
Denote
\[
   O_k^{(s)}
  = \partial_{\theta_k} f_\theta(\bm{R}^{(s)})
    - \frac{1}{B}\sum_{m=1}^B\partial_{\theta_k}f_\theta(\bm{R}^{(m)}),
  \qquad
  e^{(s)} = E_L[\Psi_\theta](\bm{R}^{(s)}) - \bar{E}_L,
\]
where $\bar{E}_L=\frac{1}{B}\sum_{m=1}^B E_L[\Psi_\theta](\bm{R}^{(m)})$, 
and we collect these quantities into $\bm{O}_c=\big(O_k^{(s)}\big)_{B\times P}\in\mathbb{R}^{B\times P}$  and
$\bm{e}=\big(e^{(s)}\big)_{B\times 1}\in\mathbb{R}^B$. 
The energy gradient from \eqref{eq:energy_gradient} is estimated by
$
  \nabla_{\theta}\mathbb E_{\rho_\theta}
   [E_L[\Psi_\theta]]\approx
  \widehat{\nabla_\theta E[\Psi_\theta]}
  := \frac{2}{B}\,\bm{O}_c^\top\bm{e}.$ 
Let additionally $S := \frac{1}{B}\,\bm{O}_c^\top\bm{O}_c$ be the Fisher-information matrix~\cite{pescia2022neural} up to a constant factor, and then SR updates $\theta$ by
solving a $P\times P$ linear system:
\begin{equation}
  (S+\lambda I)\,\delta\theta
  =-\eta\,\widehat{\nabla_\theta E[\Psi_\theta]},
  \qquad \theta\leftarrow\theta+\delta\theta.
  \label{eq:sr_update}
\end{equation}
Here, $\eta>0$ is the learning rate and $\lambda>0$ is the diagonal regularization parameter. At \(\lambda=0\), SR projects imaginary-time evolution onto the variational
manifold and gives its natural-gradient direction~\cite{pescia2022neural}.

For better approximation in practice, $\Psi_\theta$ is often over-parametrized \cite{ding2025convergence}, namely $B\ll P$. 
Noting that 
$S$ has the highest rank $B-1$, the
push-through identity~\cite{woodbury1950inverting}
\[
  \Bigl(\tfrac1B\,\bm{O}_c^\top\bm{O}_c+\lambda I_P\Bigr)^{\!-1}\bm{O}_c^\top
  \;=\;
  \bm{O}_c^\top\Bigl(\tfrac1B\,\bm{O}_c\bm{O}_c^\top+\lambda I_B\Bigr)^{\!-1}
\]
turns \eqref{eq:sr_update} equivalently into
\[
  \delta\theta
  = -\eta\,(S+\lambda I)^{-1}\,\widehat{\nabla_\theta E[\Psi_\theta]}
  = -\eta\,\frac{2}{B}\,\bm{O}_c^\top\bigl(\bm{G}+\lambda I_B\bigr)^{-1}\bm{e},
  \qquad \bm{G}=\tfrac1B\bm{O}_c\bm{O}_c^\top,
\]
which requires solving a \(B\times B\) linear system instead of a \(P\times P\) one. Resampling with the updated wave function then completes one SR iteration. 

\subsection{Some theoretical benchmarks}
\label{sec:asymptotics}
We provide some theoretical results of \eqref{eq:E0_intro} from the literature, which will serve as quantitative benchmarks to validate our method in
Section~\ref{sec:baseline}. Two regimes will be considered here: a fixed-range mean-field
regime by Hartree theory and a shrinking-range regime by GP theory, where the coupling strength constant $g$ in \eqref{eq:pair_pot} from now on is adopted as the mean-field coupling scaling 
\[g=\frac{g_{\rm GP}}{N},\]
for a given macroscopic coupling parameter $g_{\rm GP}>0$. 

\textbf{Fixed-range regime.} 
The fixed-range regime refers to a fixed (can be arbitrarily small but independent of $N$) $\sigma>0$ in \eqref{eq:pair_pot}. By restricting the admissible state for the energy functional to the product
state $\Psi(\bm{R})=\prod_{i=1}^N\phi(\bm{r}_i)$, one can check that \eqref{eq:rayleigh} reduces to 
\begin{equation}\label{E_MF def}
E_{\rm MF}(N) = \min_{\|\phi\|=1}N\left[ \langle\phi\,|\!-\!\tfrac{1}{2}\nabla^2\!+\!v_{\rm ext}|\phi\rangle + \tfrac{N-1}{2}\iint|\phi(\bm{r})|^2 v_\sigma(\bm{r}-\bm{r}')|\phi(\bm{r}')|^2\,d\bm{r}\,d\bm{r}' \right].
\end{equation}
As
$N\to\infty$, the mean-field coupling $(N-1)g\to g_{\rm GP}$, so
$E_{\rm MF}(N)/N$ converges to 
\[
  e_{\rm Hartree}
  = \min_{\|\phi\|=1}\left[
      \langle\phi\,|\!-\!\tfrac{1}{2}\nabla^2\!+\!v_{\rm ext}|\phi\rangle
      + \tfrac{g_{\rm GP}}{2}\iint|\phi(\bm{r})|^2 W_\sigma(\bm{r}-\bm{r}')|\phi(\bm{r}')|^2\,d\bm{r}\,d\bm{r}'
    \right],
\]
formally, which is the Hartree minimum. This is rigorously stated as follows. 


\begin{theorem}[Hartree theory~\cite{lewin2014derivation}]
\label{thm:meanfield}
Consider a confining \(v_{\rm ext}\) and a Gaussian potential \eqref{eq:pair_pot} with $\sigma>0$ fixed in \eqref{eq:rayleigh}. Then, 
\begin{equation}\label{thm LNR eq}
  \lim_{N\to\infty}\frac{E_{\rm MF}(N)}{N}
  =e_{\rm Hartree}(\sigma),
  \qquad
  \lim_{N\to\infty}
  \frac{E_{\rm MF}(N)-E_0(N)}{N}=0.
\end{equation}
Moreover, denoting the exact single-particle density as
\begin{equation}\label{single-partical density}
  \rho_N(\bm{r})
  := \int_{\mathbb{R}^{d(N-1)}}
    |\Psi_0(\bm{r},\bm{r}_2,\dots,\bm{r}_N)|^2\,
    d\bm{r}_2\cdots d\bm{r}_N,
\end{equation}
and $\varphi_{\rm H}$ as the Hartree minimizer attaining $e_{\rm Hartree}$, we have the convergence
\[
  \rho_N \longrightarrow \rho_{\rm H}:=|\varphi_{\rm H}|^2\quad \mbox{in}\quad L^1\quad \mbox{as} \quad N\to\infty.
\]
\end{theorem}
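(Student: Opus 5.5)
We prove the three assertions of Theorem~\ref{thm:meanfield} in turn. The first limit is an elementary statement about a one-body problem with an $N$-dependent coupling; the second needs a lower bound on $E_0(N)$ obtained through the quantum de Finetti theorem; the density convergence then follows by stability of the minimization.

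\textbf{Step 1 (the Hartree limit of $E_{\rm MF}$).} Write
\[
\mathcal E_\lambda[\phi]=\langle\phi|-\tfrac12\nabla^2+v_{\rm ext}|\phi\rangle+\tfrac{\lambda}{2}\iint|\phi|^2W_\sigma|\phi|^2,
\]
so that $E_{\rm MF}(N)/N=\min_\phi\mathcal E_{\lambda_N}[\phi]$ with $\lambda_N=g_{\rm GP}(N-1)/N\to g_{\rm GP}$. Since $W_\sigma\ge0$ is bounded, the interaction term is nonnegative and at most $\|W_\sigma\|_\infty\lambda/2$. Hence $\lambda\mapsto\min\mathcal E_\lambda$ is monotone and Lipschitz, which gives the first limit directly. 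Existence of the Hartree minimizer $\varphi_{\rm H}$ follows from the compactness of $\{\langle\phi|-\tfrac12\nabla^2+v_{\rm ext}|\phi\rangle\le C\}$ in $L^2$, which holds because $v_{\rm ext}$ is confining, together with continuity of the bounded convolution term. Positivity of $W_\sigma$'s Fourier transform makes the quartic term convex in $|\phi|^2$, so $\rho_{\rm H}$ is unique; this is needed in Step 3.

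\textbf{Step 2 (energy comparison).} The upper bound $E_0(N)\le E_{\rm MF}(N)$ is immediate by taking the product trial state. For the lower bound:
\begin{enumerate}[label=(\roman*)]
\item Write $E_0(N)/N=\mathrm{Tr}[h\gamma^{(1)}_N]+\tfrac{g_{\rm GP}(N-1)}{2N}\mathrm{Tr}[W_\sigma\gamma^{(2)}_N]$, where $h=-\tfrac12\nabla^2+v_{\rm ext}$ and $\gamma^{(k)}_N$ are the reduced density matrices of $\Psi_0$ with trace one.
\item Since $E_0(N)/N\le E_{\rm MF}(N)/N$ is bounded, $\mathrm{Tr}[h\gamma^{(1)}_N]$ is bounded.
\item Because $h$ has compact resolvent, the family $\gamma^{(k)}_N$ is strongly compact in trace class. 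After passing to a subsequence, $\gamma^{(k)}_N\to\gamma^{(k)}$ with no loss of mass.
\item The strong quantum de Finetti theorem (St{\o}rmer, Hudson--Moody) gives a probability measure $\mu$ on the unit sphere of $L^2$ with $\gamma^{(k)}=\int|\phi^{\otimes k}\rangle\langle\phi^{\otimes k}|\,d\mu(\phi)$.
\item By lower semicontinuity of the kinetic term and continuity of the bounded two-body term under trace-class convergence,
\[
\liminf_N \frac{E_0(N)}{N}\ge\int\mathcal E_{g_{\rm GP}}[\phi]\,d\mu(\phi)\ge e_{\rm Hartree}.
\]
\end{enumerate}
Combined with Step 1, this proves the second limit. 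The main obstacle is item (iii): one must ensure that no mass escapes, since otherwise only the weak de Finetti theorem would be available. Here confinement makes this easy, which is why the hypothesis on $v_{\rm ext}$ is essential.

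\textbf{Step 3 (density convergence).} Equality in the chain of inequalities in Step 2(v) forces $\mu$ to be supported on Hartree minimizers. By the uniqueness of $\rho_{\rm H}$ from Step 1, $\gamma^{(1)}$ has density $\rho_{\rm H}$. Trace-class convergence $\gamma^{(1)}_N\to\gamma^{(1)}$ implies $L^1$ convergence of the diagonals, that is, $\rho_N\to\rho_{\rm H}$ in $L^1$. Since the limit does not depend on the subsequence, the whole sequence converges.
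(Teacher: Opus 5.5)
Your proposal is correct. It is essentially the standard trapped-case argument of the cited work \cite{lewin2014derivation}: trace-class compactness of the reduced density matrices from confinement, the strong quantum de Finetti theorem for the limiting hierarchy, and a de Finetti measure concentrated on Hartree minimizers. The paper itself gives no proof beyond that citation. One point needs tightening: for uniqueness of $\rho_{\rm H}$, strict convexity of the interaction term (from $\widehat{W_\sigma}>0$) is not enough on its own. You also need convexity of $\rho\mapsto\int|\nabla\sqrt{\rho}|^2$, after replacing $\phi$ by $|\phi|$.
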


Under the additional
assumptions of the expansion, Bogoliubov
theory~\cite{grech2013excitation,lewin2015bogoliubov} gives the sharper
finite-size correction 
\begin{equation}
  e_{\rm Hartree}(\sigma)-\frac{E_0(N)}{N}
  = \frac{C_H}{N} + \mathcal{O}(N^{-\frac{3}{2}}),
  \label{eq:Ecorr_expand}
\end{equation}
where the constant $C_H>0$ can be  computed from $d, v_{\rm ext}, g_{\rm GP}$ and $\sigma$. It refines the energy convergence implied by
\eqref{thm LNR eq}, and equivalently,
\begin{equation}
     N\left|e_{\rm Hartree}(\sigma)-\frac{E_0(N)}{N}
  -\frac{C_H}{N}\right|=\mathcal O(N^{-1/2}),
  \label{eq:Ecorr_trans}
\end{equation}
which forms a quantitative benchmark for our test. 

\textbf{Shrinking-range regime.}  
Next, we consider the shrinking-range regime in three dimensions. Taking the width 
$\sigma=c/N$ with a fixed $c>0$ in \eqref{eq:pair_pot}, we have
\begin{equation}\label{shrink v}
  v_{\sigma}(\bm{r})
  = N^2\,V(N\bm{r}),
  \quad\mbox{with}\quad
  V(\bm{s})=\frac{g_{\rm GP}}{(c\sqrt{\pi})^{3}}\,\mathrm{e}^{-|\bm{s}|^2/c^2}.
\end{equation}
The scattering length $a(v_\sigma)$, a constant dependent on $v_\sigma$ and defined as \cite{lieb2005bosons}: 
\[
  \psi(\bm{r})\longrightarrow 1-\frac{a(v_\sigma)}{|\bm{r}|}
  \quad(|\bm{r}|\to\infty),\qquad
  -\nabla^2\psi+v_\sigma\,\psi=0,
\]
can be verified to satisfy the scaling equation: 
$
  Na(v_{\sigma})=a(V)=:a_0.$ 
Such a scaling invariance leads to the following theorem, showing that in the limit $N\to \infty$, the GS energy per particle in 3D is given by the minimum of the GP functional: 
\[
  e_{\rm GP}
  = \min_{\|\phi\|=1}\left[
      \langle\phi\,|\!-\!\tfrac{1}{2}\nabla^2\!+\!v_{\rm ext}|\phi\rangle
      + \tfrac{g_{\rm eff}}{2}\int|\phi(\bm{r})|^4\,d\bm{r}
    \right],
\]
with an effective coupling constant
$g_{\rm eff}:=4\pi a_0$~\cite{lieb2005bosons}.  

\begin{theorem}[GP theory~\cite{lieb2005bosons}]
\label{thm:lieb_seiringer}
Set $d=3$ in \eqref{eq:rayleigh}, and consider a confining \(v_{\rm ext}\) and the shrinking-range potential 
\eqref{shrink v} with a fixed $c>0$. Then, 
\begin{equation}\label{GP limit}
  \lim_{N\to\infty}\frac{E_0(N)}{N} = e_{\rm GP}.
\end{equation}
Moreover, with $\varphi_{\rm GP}$ the GP minimizer attaining $e_{\rm GP}$, 
the single-particle density \eqref{single-partical density} converges as 
\[
  \rho_N \longrightarrow |\varphi_{\rm GP}|^2\quad \mbox{in}\quad 
  L^1(\mathbb{R}^3)\quad \mbox{as} \quad N\to\infty.
\]
\end{theorem}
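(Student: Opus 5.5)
The plan follows the Lieb--Seiringer--Yngvason strategy: prove matching upper and lower bounds on $E_0(N)/N$, then obtain the density statement from a variational (Hellmann--Feynman type) perturbation argument. Throughout we use the scaling $Na(v_\sigma)=a_0$, which holds because $v_\sigma=N^2V(N\cdot)$ and the zero-energy scattering equation is invariant under $\bm r\mapsto N\bm r$. We also use that the Gaussian $V$ is nonnegative, radial and of finite range up to exponentially small tails. For the lower bound we may truncate $V$ to a compactly supported potential; since $V\ge 0$ this only lowers the energy, and the scattering length changes by an amount tending to zero as the cutoff grows.

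\emph{Upper bound.} Take the Dyson--Jastrow trial state
\[
\Psi(\bm R)=\prod_{i=1}^N\varphi_{\rm GP}(\bm r_i)\prod_{i<j}F(\bm r_i-\bm r_j).
\]
Here $F(\bm r)=\psi(\bm r)/(1-a(v_\sigma)/b)$ for $|\bm r|<b$, with $\psi$ the scattering solution of $v_\sigma$, and $F\equiv1$ for $|\bm r|\ge b$. The cutoff $b$ satisfies $a(v_\sigma)\ll b\ll \bar\rho^{-1/3}$, e.g.\ $b=N^{-1/3-\epsilon}$, where $\bar\rho\sim N$ is the density. Expanding the kinetic energy, the one-body part gives $N\langle\varphi_{\rm GP}|-\tfrac12\nabla^2+v_{\rm ext}|\varphi_{\rm GP}\rangle$. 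The pair part gives
\[
\tfrac{N^2}{2}\int|\varphi_{\rm GP}|^4\int\bigl(|\nabla F|^2+\tfrac12 v_\sigma F^2\bigr),
\]
and the scattering identity makes this $\tfrac{N}{2}\cdot4\pi a_0\int|\varphi_{\rm GP}|^4\,(1+o(1))$ in the paper's units (with $-\tfrac12\nabla^2$, the constant $g_{\rm eff}=4\pi a_0$ is understood with the correspondingly normalized scattering length). The three-body and normalization errors are controlled by $\bar\rho\, b^3\to0$. This yields $\limsup_N E_0(N)/N\le e_{\rm GP}$.

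\emph{Lower bound.} This is the main obstacle, since a product-state ansatz cannot produce $4\pi a_0$ from $\int v_\sigma$. I would use Dyson's lemma. It replaces part of the kinetic energy plus $v_\sigma$ by a soft potential $a(v_\sigma)U(|\bm r|)$ supported on a shell of radius of order $R$, with $a\ll R\ll\bar\rho^{-1/3}$. The nearest-neighbour restriction is kept, so that the many-body problem reduces to a softer one. The remaining steps are then as follows:
\begin{enumerate}
\item Localize into boxes on which $\varphi_{\rm GP}$ is nearly constant.
\item Apply Temple's inequality in each box (or the homogeneous-gas bound $e_0(\rho)\ge4\pi a\rho(1-C(\rho a^3)^{1/17})$ in units with $-\nabla^2$), together with superadditivity and convexity to distribute particles among the boxes.
\item Obtain $\liminf_N E_0(N)/N\ge e_{\rm GP}$ as a Riemann sum approximating the Thomas--Fermi-plus-kinetic GP functional.
\end{enumerate}
The delicate point is that the dilute parameter $\bar\rho a^3\sim N\cdot N^{-3}\to0$ ensures the error terms vanish. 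The kinetic energy of the condensate is recovered through the ground-state substitution $\Psi=\varphi_{\rm GP}^{\otimes N}F$, with $F$ bounded via the GP equation for $\varphi_{\rm GP}$, as in Lieb--Seiringer--Yngvason.

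\emph{Density convergence.} Perturb $v_{\rm ext}\to v_{\rm ext}+\delta W$ for bounded $W$. The energy limit holds uniformly for small $\delta$, and $e_{\rm GP}(\delta)$ is concave in $\delta$. Since $\Psi_0$ is a trial state for the perturbed problem, the difference quotients of $E_0(N,\delta)/N$ bound $\int W\rho_N/N$ from both sides. Differentiability of $e_{\rm GP}$ at $\delta=0$ follows from uniqueness of the positive GP minimizer, and gives
\[
\int W\rho_N/N\to\int W|\varphi_{\rm GP}|^2.
\]
Hence $\rho_N/N\rightharpoonup|\varphi_{\rm GP}|^2$ weakly. With the paper's normalization of $\rho_N$, no factor $N$ is needed. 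Tightness follows from confinement of $v_{\rm ext}$ and the energy bound, and equal total mass then upgrades weak convergence to convergence in $L^1$ via Scheffé-type arguments.
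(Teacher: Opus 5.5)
The paper does not prove this theorem; it quotes it from Lieb--Seiringer--Yngvason. Your energy argument follows that strategy and is acceptable as an outline: a Jastrow trial state built from the zero-energy scattering solution for the upper bound, and ground-state substitution, Dyson's lemma, box localization, Temple's inequality and convexity in the particle number for the lower bound. Your perturbation argument $v_{\rm ext}\to v_{\rm ext}+\delta W$ is also correct. Concavity of $\delta\mapsto e_{\rm GP}(\delta)$ and uniqueness of the GP minimizer give $\int W\rho_N\to\int W|\varphi_{\rm GP}|^2$ for every bounded $W$, i.e.\ weak $L^1$ convergence.

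\textbf{The gap is the final upgrade to convergence in $L^1$.} Weak convergence, tightness and equal mass do not imply norm convergence in $L^1$. Weak convergence plus convergence of norms suffices in $L^p$ for $1<p<\infty$, but not in $L^1$. For example, the normalized densities $|\varphi_{\rm GP}|^2(1+\sin(n r_1))$ are nonnegative and uniformly tight. By Riemann--Lebesgue they converge weakly to $|\varphi_{\rm GP}|^2$, yet their $L^1$ distance to $|\varphi_{\rm GP}|^2$ tends to $2/\pi$. Scheffé's lemma needs a.e.\ convergence, which you do not have.

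What is missing is a compactness estimate that rules out such oscillations, and one is available. Apply Cauchy--Schwarz to $\nabla\rho_N=2\,\mathrm{Re}\int\overline{\Psi_0}\,\nabla_1\Psi_0$ (the Hoffmann-Ostenhof inequality). Together with the symmetry of $\Psi_0$ and $v_\sigma\ge0$, this gives
\[
\int_{\mathbb R^3}|\nabla\sqrt{\rho_N}|^2\le\int_{\mathbb R^{3N}}|\nabla_1\Psi_0|^2\le 2\Bigl(\frac{E_0(N)}{N}-\inf v_{\rm ext}\Bigr),
\]
which is bounded uniformly in $N$ by your upper bound. Tightness comes from $\int(v_{\rm ext}-\inf v_{\rm ext})\rho_N\le E_0(N)/N-\inf v_{\rm ext}$. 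Rellich--Kondrachov then makes $\{\sqrt{\rho_N}\}$ precompact in $L^2(\mathbb R^3)$. Hence $\{\rho_N\}$ is precompact in $L^1$, because $\|\rho-\rho'\|_{L^1}\le\|\sqrt\rho-\sqrt{\rho'}\|_{L^2}\|\sqrt\rho+\sqrt{\rho'}\|_{L^2}$. Every limit point must equal the weak limit $|\varphi_{\rm GP}|^2$, so the whole sequence converges. Alternatively, the Lieb--Seiringer Bose--Einstein condensation theorem gives trace-norm convergence of the one-body density matrix to $|\varphi_{\rm GP}\rangle\langle\varphi_{\rm GP}|$, which implies the $L^1$ statement directly. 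As written, your argument proves only the weak form of the density claim.

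Two smaller points.
\begin{enumerate}
\item For the paper's Hamiltonian the pair integrand is $|\nabla F|^2+v_\sigma F^2$, not $|\nabla F|^2+\tfrac12 v_\sigma F^2$. The terms $\tfrac12|\nabla_i F|^2+\tfrac12|\nabla_j F|^2$ combine to $|\nabla F|^2$, while $v_\sigma$ carries coefficient one. So $\int(|\nabla\psi|^2+v_\sigma\psi^2)=4\pi a(v_\sigma)$ with exactly the paper's scattering equation, and $g_{\rm eff}=4\pi a_0$ needs no reinterpretation. Your $\tfrac12$ would instead produce the scattering length of $v_\sigma/2$.
\item Since the Gaussian is not compactly supported, take $F=\psi/\psi(b)$ on $|\bm r|<b$ so that $F$ is continuous. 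Also, for a Jastrow product the elementary normalization bound $\prod f^2\ge 1-\sum(1-f^2)$ costs a relative error of order $N^2ab^2\sim Nb^2$. With that bound you therefore need $b\ll N^{-1/2}$ (or Dyson's nearest-neighbour form), not merely $\bar\rho\,b^3\to0$. Your sample choice $b=N^{-1/3-\epsilon}$ works only for $\epsilon>1/6$.
\end{enumerate}
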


Motivated by trapped Bogoliubov
theories~\cite{brennecke2022bogoliubov} for compactly supported
interactions, we examine the empirical form, 
\begin{equation}\label{eq:bogoliubov_scaling}
  \frac{E_0(N)}{N} = e_{\rm GP}+  \mathcal{O}(N^{-1})+\mathcal{O}(N^{-2}).
\end{equation}

\section{Boson-informed neural network (BINN) method}
\label{sec:methods}

We now construct our BINN to solve GS under VMC~\eqref{VMC def}.  We  first give its overall structure and then detail the key components. The full algorithm and efficiency are discussed in the end.


\subsection{Basic network architecture}
\label{sec:ansatz} 
Motivated by the Slater--Jastrow factorization~\cite{ceperley1977monte,mcmillan1965ground} for describing fermions and the statistical symmetry of bosons,  
we adopt a \textit{Symmetry-Jastrow} factorization as the foundation for 
the BINN trial state $\Psi_\theta\approx\Psi$: 
\vspace{-4mm}
\begin{equation}
  \Psi_\theta(\bm{R}) = \Phi^{\rm sym}_{\theta_1}(\bm{R})\,\Phi^{\rm Jas}_{\theta_2}(\bm{R}),
  \quad
  \Phi^{\rm Jas}_{\theta_2}(\bm{R}) = \exp\!\left(\sum_{i<j}u_{\theta_2}(r_{ij})\right),\quad r_{ij}=|\bm{r}_i-\bm{r}_j|.
  \label{eq:psi_factored}
\end{equation}
Here, the complete set of trainable parameters is defined as $\theta := \{\theta_1, \theta_2\}$. 
The symmetry component $\Phi^{\rm sym}_{\theta_1}$ is designed to capture macroscopic features of bosons, and the Jastrow component 
$\Phi^{\rm Jas}_{\theta_2}$ is for capturing correlation effects between bosons. 
The pair function $u_{\theta_2}$ of 
the Jastrow exponent will be  constructed in Section~\ref{sec:cusp_jastrow}.  The symmetric wave function $\Phi^{\rm sym}_{\theta_1}$ is a log-space construction to fulfill the positivity of GS, composed of a potential-aware one-body envelope sum and a pooled neural backbone: 
\vspace{-4mm}
\begin{equation}
\log\Phi^{\rm sym}_{\theta_1}(\bm{R})
= \sum_{i=1}^N \chi^{\rm env}_{\theta_1}(\bm{r}_i) + f^{\rm TNN}_{\theta_1}(\bm{R}),
\label{eq:sym_split}
\end{equation}
where the one-body envelope $\chi^{\rm env}_{\theta_1}$ incorporates the macroscopic confinement background into the representation of each boson and the backbone $f^{\rm TNN}_{\theta_1}$ is responsible for the deviation of particles from the envelope.
We construct $f^{\rm TNN}_{\theta_1}$ in the following with $\chi^{\rm env}_{\theta_1}$
left to Section~\ref{sec:envelope}. Overall, the BINN architecture \eqref{eq:psi_factored} is consistent with the VMC trial \eqref{eq:logspace}, where now 
$f_\theta(\bm{R})=\sum_{i=1}^N \chi^{\rm env}_{\theta_1}(\bm{r}_i) + f^{\rm TNN}_{\theta_1}(\bm{R})+\sum_{i<j}u_{\theta_2}(r_{ij}).$

We construct the neural backbone from a positive, permutation-symmetric tensor neural network (TNN), which will thus make the full BINN \eqref{eq:psi_factored} permutation-symmetric. 
A standard TNN representation~\cite{wang2024tensor} approximates a high-dimensional function as a sum of separable products, 
and here on the many-body function, we group the coordinates by particles so that
each factor takes the full $d$-dimensional position of one particle:
\vspace{-1mm}
\[
  \Phi_{\rm TNN}(\bm{R})=\Phi_{\rm TNN}(\bm{r}_1,\ldots,\bm{r}_N)
  \;=\; \sum_{m=1}^R \prod_{i=1}^N \phi_{m,i}(\bm{r}_i),
\]
with an ordered tuple of $N$ general neural
factors $\phi_{m,i}:\mathbb{R}^d\to\mathbb{R}$ for each branch $m$. For a bosonic GS here, two special adaptations are made. 
First, to obtain the permutation
symmetry, we impose a shared factor across all particle slots, i.e.,  
$\phi_{m,i}\equiv\phi_m,\ 1\leq i\leq N$. The exchange symmetry is therefore exact and inherent, without large sums of $N$-dependent complexity. 
Second, to ensure the positivity, we impose $\phi_{m}>0$ by writing
$\phi_m=\mathrm{e}^{\varphi_m}$ with neural log-orbitals
$\varphi_m:\mathbb{R}^d\to\mathbb{R}$. With an explicit positive branch weight
$\mathrm{e}^{c_m}$, where $c_m\in\mathbb{R}$ is trainable, taking the logarithm of the resulting TNN gives
\begin{equation}
   f^{\rm TNN}_{\theta_1}(\bm{R})
  \;=\; \log\sum_{m=1}^R \exp\!\Bigl(c_m+\sum_{i=1}^N \varphi_m(\bm{r}_i)\Bigr).
  \label{eq:tnn}
\end{equation}

Each branch in \eqref{eq:tnn} is a Hartree product, while $R>1$ mixes such
products to capture non-product effects. Defining the log-sum-exp function $\operatorname{LSE}(\{z_m\}_{m=1}^R):=\log\sum_{m=1}^{R}\exp(z_m)$, the shared envelope
can equivalently be placed inside the pooled branches:
\begin{equation}
\sum_{i=1}^N\chi^{\rm env}_{\theta_1}(\bm r_i)+f^{\rm TNN}_{\theta_1}(\bm R)
=\operatorname{LSE}\left(\bigg\{
c_m+\sum_{i=1}^N\left[\varphi_m(\bm r_i)+\chi^{\rm env}_{\theta_1}(\bm r_i)\right]
\bigg\}_{m=1}^R\right).
\label{eq:tnn_env_lse}
\end{equation}
This pooled form would enable the exact kinetic-trace evaluation given in
Section~\ref{sec:kinetic}, and the complete architecture is demonstrated in
Fig.~\ref{fig:architecture}.

\begin{figure}[t!]
\centering
\resizebox{\textwidth}{!}{%
\begin{tikzpicture}[
  >={Stealth[length=2mm]},
  box/.style={draw, rounded corners, minimum height=8.5mm, minimum width=38mm,
              align=center, font=\small, inner sep=3pt},
  neural/.style={box, fill=blue!8},
  analytic/.style={box, fill=orange!16},
  op/.style={draw, circle, minimum size=7mm, inner sep=0pt,
             font=\footnotesize, fill=black!6},
  io/.style={align=center, font=\small},
  scalehead/.style={font=\normalsize\sffamily\bfseries, text=black!90,
                   text width=30mm, align=center, inner sep=0pt},
  grp/.style={draw, dashed, rounded corners, fill=black!3, inner sep=5pt}
]
\node[io] (coord) at (0,0) {$\bm R=$\\$(\bm r_1,\dots,\bm r_N)$};
\node[analytic] (env) at (4.2,1.0)
  {envelope basis\\$\{v_\gamma\}_{\gamma=1}^{L}$};
\node[neural] (mlp) at (4.2,2.3)
  {TNN log-orbitals\\$\varphi_m(\bm r_i),\quad m=1,\ldots,R$};
\node[op] (m1) at (7.0,1.65) {$+$};
\node[op,minimum size=15mm] (s1) at (8.6,1.65) {$c_m+\sum_i$};
\node[box,fill=black!6,minimum width=20mm] (lse) at (10.8,1.65)
  {$\operatorname{LSE}(\,\cdot\,)$};
\node[scalehead] (macroscale) at (4.2,3.15)
  {one-body branch};
\node[analytic] (uc) at (4.2,-2.3) {cell Jastrow\\$u_{\rm c}(r_{ij})$};
\node[neural] (unn) at (4.2,-1.0)
  {neural Jastrow\\$\varphi_{\theta_2}(r_{ij}^2)$};
\node[op] (m2) at (7.0,-1.65) {$+$};
\node[op,minimum size=15mm] (s2) at (8.6,-1.65) {$\sum_{i<j}$};
\node[scalehead] (pairscale) at (4.2,-3.15)
  {pair branch};
\node[op] (plus) at (12.8,0) {$+$};
\node[io,font=\normalsize] (out) at (14.3,0)
  {$\log\Psi_\theta$};
\begin{scope}[on background layer]
  \node[grp, fit=(env)(mlp)(macroscale)] (g1) {};
  \node[grp, fit=(uc)(unn)(pairscale)]  (g2) {};
\end{scope}
\draw[->] (coord) to[out=70,in=180]
  node[above,font=\footnotesize,pos=.55]{$\bm r_i$} (g1.west |- m1);
\draw[->] (coord) to[out=-70,in=180]
  node[below,font=\footnotesize,pos=.55]{$r_{ij}$} (g2.west |- m2);
\draw[->] (env) -| (m1);
\draw[->] (mlp) -| (m1);
\draw[->] (m1) -- (s1);
\draw[->] (s1) -- (lse);
\draw[->] (lse) to[out=0,in=110] (plus);
\draw[->] (uc)  -| (m2);
\draw[->] (unn) -| (m2);
\draw[->] (m2) -- (s2);
\draw[->] (s2) to[out=0,in=-110] (plus);
\draw[->] (plus) -- (out);
\end{tikzpicture}}
\caption{{The BINN architecture. At each scale an analytic prior and a
neural network act in parallel and are summed, the one-body branch
pooling over particles into a log-sum-exp readout and the pair
branch summing over pair distances. Orange blocks mark the analytic
constructions, blue blocks the neural networks.}}
\label{fig:architecture}
\end{figure}

\subsection{Potential-aware one-body envelope}
\label{sec:envelope} 
In principle, the TNN backbone in \eqref{eq:sym_split} alone has the capacity to approximate a wave profile. However, complex external confinement potentials, i.e., $v_{\mathrm{ext}}$ in \eqref{H original}, can induce highly structured macroscopic spatial profiles in the target wave function. For instance, features like the bimodal structure of a double well, the periodic modulation of an optical lattice, and the annular geometry of a ring trap are difficult for a generic network to learn from scratch. 
To improve accuracy and training efficiency, we introduce a shared, potential-aware envelope \(\chi^{\mathrm{env}}_{\theta_1}\) 
  via \eqref{eq:sym_split}. This envelope analytically encodes the confinement structure, allowing the neural backbone to focus solely on refining the remaining many-body behavior. 

The construction of $\chi^{\rm env}_{\theta_1}$ is heuristic. The first observation is that for the one-body Hamiltonian
$-\tfrac12\nabla^2+v_{\rm ext}$
with the most commonly used harmonic oscillator trapping potential 
\begin{equation}\label{harmonic oscillator trapping potential}
    v_{\rm ext}=\tfrac12\omega^2\vert{}\bm{r}\vert{}^2,
\end{equation}
where $\omega>0$ denotes the frequency, 
the single-particle GS $\varphi_\star$ is a Gaussian. Therefore, its logarithm up to an additive constant satisfies
\vspace{-2mm}
\[\log\varphi_\star = -\tfrac12\omega\vert{}\bm{r}\vert{}^2 = -\frac{v_{\rm ext}(\bm{r})}{\omega}.\] For a general trap $v_{\rm ext}$, the closed-form expression for $\varphi_\star$ may not be available, while physical intuition suggests that a particle in a stationary state tends to localize in regions of lower potential energy. This largely implies an inverse relationship between $v_{\rm ext}$ and the amplitude of the wave function, and we encode this into the wave function through a factor $\mathrm{e}^{-\alpha v_{\rm ext}}$ with $\alpha>0$. In practice, 
multiple external trapping effects can be considered, and the external potential in this case is given in an additive form
\vspace{-1mm}
\[v_{\rm ext}(\bm r)=\sum_{\gamma=1}^{L}v_{\gamma}(\bm r).\]
To be general, 
we thus define the envelope $\chi^{\rm env}_{\theta_1}$ for the log-setup \eqref{eq:sym_split} as
\vspace{-2mm}
\begin{equation}
\chi^{\rm env}_{\theta_1}(\bm{r}) =  -\sum_{\gamma=1}^{L}\alpha_{\gamma}v_{\gamma}(\bm{r}), \qquad \alpha_{\gamma}\geq 0, \quad \gamma=1,\ldots,L.
\label{eq:envelope}\end{equation}
Here, all coefficients $\{\alpha_{\gamma}\}_{\gamma=1}^{L}$ are treated as trainable parameters, allowing the envelope to adapt independently to different contributions to the confinement. Sections~\ref{sec:ablation} and \ref{sec:geometry_stress} will demonstrate the necessity and effectiveness of this envelope design.

This completes the one-body branch introduced in \eqref{eq:psi_factored}.
In total, the parameter set $\theta_1$ comprises three parts: the envelope coefficients $\{\alpha_{\gamma}\}_{\gamma=1}^{L}$, the weights $\{c_m\}_{m=1}^{R}$ in \eqref{eq:tnn}, and all internal parameters of $\varphi_m$.
\subsection{Cell Jastrow factor}
\label{sec:cusp_jastrow}
Within \eqref{eq:psi_factored}, 
the Jastrow component 
$\Phi^{\rm Jas}_{\theta_2}$ aims to capture the correlation between bosons through the key 1D function $u:[0,\infty)\to\mathbb{R}$, where we henceforth write \(u= u_{\theta_2}\) for brevity. We shall construct one that is effective for learning.  

As shown in \eqref{eq:pair_pot}, the interaction potential scales as $v_{\sigma}=\mathcal{O}(\sigma^{-d})$. For the exact GS, the local energy $E_L$ \eqref{eq:eloc} must remain constant in all spatial configurations $\bm{R}$. This constancy is particularly difficult to learn when two particles approach each other. To compensate for the diverging pairwise interaction, the trial wave function $\Psi_{\theta}$ requires a specialized local structure. Specifically, the kinetic energy of this local structure, given by $-\frac{1}{2}\!\bigl(\nabla^2 f_\theta(\bm{R}) + \vert{}\nabla f_\theta(\bm{R})\vert{}^2\bigr)$ in \eqref{eq:eloc}, should cancel the singularity in $v_{\mathrm{int}}$. Due to spectral bias~\cite{rahaman2019spectral}, neural networks struggle to efficiently capture such high-frequency features, so relying solely on a neural network to learn this cancellation is impractical.  We therefore propose to analytically incorporate this physical cancellation into the framework. To achieve this, we first construct an analytic cell factor $u_{\rm c}$ to absorb the singularity in $v_{\sigma}$. A smooth neural network is then employed as a residual to learn the regular part of the correlations.

To begin with, we expand the local energy~\eqref{eq:eloc} under the factorization~\eqref{eq:psi_factored}:
\[
  E_L = E_L^0 + R_{\rm 2b} + R_{\rm cross} + R_{\rm 3b},
\]
where 
$
  E_L^0 := -\tfrac12\bigl(\nabla^2\log\Phi^{\rm sym}_{\theta_1}+|\nabla\log\Phi^{\rm sym}_{\theta_1}|^2\bigr)+V_{\rm ext}(\bm{R})$ is independent of $u$, 
and 
\begin{align}
  &R_{\rm 2b} := \sum_{i<j}\bigl[-\nabla_{\rm rad}^2u(r_{ij})-|u'(r_{ij})|^2+v_\sigma(r_{ij})\bigr]\mbox{ with }\nabla_{\rm rad}^2u(r):=u''(r)+\frac{d-1}{r}u'(r),
  \notag\\
  &R_{\rm cross} := -\sum_{i<j}u'(r_{ij})(\nabla_i-\nabla_j)\log\Phi^{\mathrm{sym}}_{\theta_1}\cdot (\bm{r}_i-\bm{r}_j)/r_{ij},
  \notag\\
  &R_{\rm 3b} := -\tfrac12\sum_i\sum_{j\ne i}\sum_{k\ne i,j}
    \nabla_i u(r_{ij})\cdot\nabla_i u(r_{ik}).
  \notag
\end{align}
The interaction $v_\sigma$ enters the formulation exclusively through the term $R_{\rm 2b}$. To track the complete singular contribution of a fixed pair, write
$
  \bm F_{ij}:=u'(r_{ij})(\bm r_i-\bm r_j)/r_{ij}.$
The two-body term of the pair $(i,j)$ and all terms in
$R_{\rm cross}+R_{\rm 3b}$ that contain the potentially large factor
$\bm F_{ij}$ can then be collected exactly as
\vspace{-2mm}
\begin{equation}
  \mathcal C_{ij}
  :=\mathcal Q_\sigma[u](r_{ij})
  -\bm F_{ij}\cdot\left[
    (\nabla_i-\nabla_j)\log\Phi^{\rm sym}_{\theta_1}
    +\sum_{k\ne i,j}(\bm F_{ik}-\bm F_{jk})
  \right],
\label{eq:isolated_contact_contribution}
\end{equation}
where \begin{equation}
  \mathcal{Q}_\sigma[u](r) \;:=\;
  -\nabla_{\rm rad}^2\,u(r) - \bigl|u'(r)\bigr|^2 + v_\sigma(r),
  \qquad r\geq 0.
  \label{eq:cusp_condition}
\end{equation}

Let us examine the contact limit \(\bm r_i=\bm r_j\). Exchange symmetry implies \((\nabla_i-\nabla_j)\log\Phi^{\rm sym}_{\theta_1}=0\), while \(\bm F_{ik}=\bm F_{jk}\) for every \(k\ne i,j\). The two coupling terms in~\eqref{eq:isolated_contact_contribution} therefore vanish, reducing the contact contribution to \(\mathcal C_{ij}=\mathcal Q_\sigma[u](0)\). To control this contribution, we analyze \(\mathcal Q_\sigma[u](r)\) in a neighborhood of \(r=0\) and focus on a near-contact cell \([0,\ell]\), where \(\ell>0\) is fixed and independent of \(\sigma\). On this cell, we seek an analytic radial factor \(u_{\rm c}(r)=\log f(r)\), for which
\[
\mathcal Q_\sigma[u_{\rm c}](r)
=-\nabla_{\rm rad}^2u_{\rm c}(r)-|u_{\rm c}'(r)|^2+v_\sigma(r)
=\frac{-\nabla_{\rm rad}^2f(r)+v_\sigma(r)f(r)}{f(r)}
=:q_\sigma(f)(r).
\]We impose the following two structural requirements on this construction:
\begin{itemize}
\item \emph{Positivity}: $f$ is strictly positive;
\item \emph{Locality}: $f$ is constant beyond the cell, leaving
longer-range correlations to the smooth neural network.
\end{itemize}
To achieve this, we seek a solution corresponding to a constant right-hand side, i.e., $q_\sigma(f)(r)\equiv \nu_\sigma$ for $0\le r\le \ell$, which yields
$-\bigl(f''+\frac{d-1}{r}f'\bigr)+v_\sigma f=\nu_\sigma f, \quad r\in(0,\ell),$ where $\nu_\sigma$ is a constant. 
To ensure the regularity of $\nabla_{\rm rad}^2f(r)$ at the origin, we require $f'(0)=0$. Furthermore, the locality condition that requires $f$ to match smoothly a constant continuation at $r=\ell$ implies $f'(\ell)=0$. Combining these conditions yields a Neumann eigenvalue problem.
\begin{equation}
  \left\{
  \begin{array}{l}
    -\Bigl(f''+\tfrac{d-1}{r}\,f'\Bigr)+v_\sigma f
      \;=\; \nu_\sigma\,f, \qquad r\in(0,\ell),\\[4pt]
    f'(0)\;=\;f'(\ell)\;=\;0 .
  \end{array}
  \right.
  \label{eq:cell_problem}
\end{equation}
The radial operator \(-\nabla_{\rm rad}^2+v_\sigma\) in the cell with the boundary condition thus defines a
discrete spectrum by the classical Sturm--Liouville theory
\cite{courant1953methods}, and the positivity of $f$ selects the principal eigenpair
\((\nu_\sigma,f_\sigma)\). We normalize the principal eigenfunction by
\(f_\sigma(\ell)=1\). By construction, the cell factor $u_{\rm c}(r)=\log f_\sigma$ satisfies 
\begin{equation}
  \mathcal{Q}_\sigma[u_{\rm c}](r)\equiv\nu_\sigma,\qquad 0\le r\le\ell.
  \label{eq:cell_defect}
\end{equation}

Next, we combine the analytic cell factor $u_{\rm c}$ with a
smooth network remainder,
\begin{equation}
  u_{\theta_2}(r) \;=\; u_{\rm c}(r)
    + \underbrace{\varphi_{\theta_2}(r^2)}_{u_{\rm NN}(r)} ,
  \label{eq:jas_expression}
\end{equation}
where $\varphi_{\theta_2}$ is a one-dimensional MLP function with parameter $\theta_2$. With tanh activations and a compact parameter set $\Theta$, we have the uniform
derivative bound
$
M_\Theta:=
\sup_{\theta_2\in\Theta}\ \sup_{0\le s\le\ell^2}
\left(
|\partial_s\varphi_{\theta_2}(s)|
+|\partial_s^2\varphi_{\theta_2}(s)|
\right)<\infty.$
The squared variable input for $\varphi_{\theta_2}(s)$ in \eqref{eq:jas_expression}, i.e., $s=r^2$, provides smallness of its derivative near $r=0$, which is essential (will be justified via the ablation study later) to obtain the following uniform boundedness for $\mathcal{Q}_\sigma[u]$.

\begin{theorem}[Uniform contact-layer bound]
\label{prop:cusp_laplacian}
Let $\Theta$ be compact with $M_\Theta<\infty$, and fix $d(=2,3)$, $\ell>0$,
and $\sigma_0,\kappa>0$. For any $g>0$ and $0<\sigma\le\sigma_0$ satisfying
$g\sigma^{2-d}\le\kappa$, let $u_{\theta_2}$ be given by~\eqref{eq:jas_expression},
where $u_{\rm c}=\log f_\sigma$ and $f_\sigma$ is the principal eigenfunction
of~\eqref{eq:cell_problem} normalized by $f_\sigma(\ell)=1$.
Then we have
\[
\sup_{\theta_2\in\Theta}\ \sup_{0\le r\le\ell}
\bigl|\mathcal{Q}_\sigma[u_{\theta_2}](r)\bigr|\le C ,
\]
where $C=C(M_\Theta,\ell,\kappa,\sigma_0,d)>0$, independent of $g$, $\sigma$, and $\theta_2$.
\end{theorem}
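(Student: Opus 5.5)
The approach is to split off the analytic part. Write $w(r):=\varphi_{\theta_2}(r^2)$, so that $u_{\theta_2}=u_{\rm c}+w$. Expanding the square in \eqref{eq:cusp_condition} and using \eqref{eq:cell_defect} gives, on $[0,\ell]$,
\[
\mathcal Q_\sigma[u_{\theta_2}](r)=\nu_\sigma-\nabla_{\rm rad}^2 w(r)-|w'(r)|^2-2u_{\rm c}'(r)\,w'(r).
\]
It therefore suffices to bound four quantities uniformly in $g,\sigma,\theta_2$: $\nu_\sigma$, the two pure network terms, and the cross term.

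\textbf{Network terms.} With $s=r^2$ we have $w'=2r\varphi'$ and $w''=2\varphi'+4r^2\varphi''$. Hence
\[
\nabla_{\rm rad}^2w=2d\,\varphi'(r^2)+4r^2\varphi''(r^2).
\]
This is the point of the squared input: the factor $r$ in $w'$ cancels the $\frac{d-1}{r}$ singularity. Both $|\nabla_{\rm rad}^2 w|$ and $|w'|^2$ are then bounded by constants depending only on $M_\Theta$, $\ell$ and $d$. The cross term equals $4\,\varphi'(r^2)\,r\,u_{\rm c}'(r)$, so it remains to bound $\sup_{[0,\ell]}|r\,u_{\rm c}'(r)|$.

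\textbf{Bound on $\nu_\sigma$.} The principal Neumann eigenvalue is given by the Rayleigh quotient with weight $r^{d-1}$. Testing with $f\equiv1$ gives
\[
0\le\nu_\sigma\le \frac{\int_0^\ell v_\sigma r^{d-1}dr}{\int_0^\ell r^{d-1}dr}\le \frac{d\,g}{|S^{d-1}|\,\ell^d}.
\]
Here we used $v_\sigma\ge0$ and $\int_{\mathbb R^d}v_\sigma=g$. Since $g\le\kappa\sigma^{d-2}\le\kappa\sigma_0^{d-2}$ for $d=3$ and $g\le\kappa$ for $d=2$, $\nu_\sigma$ is bounded by a constant depending only on $\kappa,\sigma_0,\ell,d$.

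\textbf{Bound on $r\,u_{\rm c}'$.} I expect this to be the main step. I first show that $f_\sigma$ is nondecreasing, i.e.\ $u_{\rm c}'\ge0$.
\begin{itemize}
\item Write the equation as $(r^{d-1}f')'=r^{d-1}(v_\sigma-\nu_\sigma)f$ with $f>0$.
\item Since $v_\sigma$ is radially strictly decreasing, $v_\sigma-\nu_\sigma$ changes sign at most once, from positive to negative, at some $r_0$.
\item It must change sign at least once. Otherwise $r^{d-1}f'$ would be nondecreasing, starting from $0$ and returning to $0$ at $\ell$, which forces $v_\sigma\equiv\nu_\sigma$ and is impossible.
\item So $r^{d-1}f'$ increases from $0$ on $[0,r_0]$ and then decreases to $0$ at $\ell$. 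Hence $f'\ge0$ on $[0,\ell]$.
\end{itemize}
For the upper bound, use the Riccati form $(r^{d-1}u_{\rm c}')'=r^{d-1}(v_\sigma-\nu_\sigma-|u_{\rm c}'|^2)\le r^{d-1}v_\sigma$. Integrating from $0$ gives
\[
0\le r\,u_{\rm c}'(r)\le r^{2-d}\int_0^r s^{d-1}v_\sigma(s)\,ds.
\]
Bound the integral by $\frac{g\,r^d}{d\,(\sigma\sqrt\pi)^d}$ when $r\le\sigma$, and by $g/|S^{d-1}|$ when $r\ge\sigma$. In both cases the right side is at most a constant times $g\sigma^{2-d}\le\kappa$; for $d=2$ this holds since $r^{0}=1$, and for $d=3$ since $r^{-1}\le\sigma^{-1}$. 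Thus $\sup_{[0,\ell]}|r u_{\rm c}'|\le C(\kappa,d)$, uniformly in $\sigma$ and $g$.

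Collecting the four bounds gives the claimed constant $C(M_\Theta,\ell,\kappa,\sigma_0,d)$. A direct bound on $\max f_\sigma/\min f_\sigma$ would not work here: in $d=2$ that ratio grows like $(\ell/\sigma)^{g/2\pi}$. This is why the argument relies on the monotonicity of $f_\sigma$ combined with the Riccati estimate. The remaining technical point is to check that the principal eigenfunction is regular enough at $r=0$ for the integration from the origin to be valid, which follows from $f'(0)=0$ and the smoothness of $v_\sigma$.
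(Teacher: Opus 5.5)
Your proof is correct and follows the paper's argument essentially step for step: the same expansion of $\mathcal{Q}_\sigma[u_{\rm c}+u_{\rm NN}]$, the same Rayleigh-quotient bound on $\nu_\sigma$ via the test function $f\equiv1$, the same single-sign-change argument showing $r^{d-1}f_\sigma'\ge0$, and the same two-regime ($r\le\sigma$ versus $r\ge\sigma$) control of $r^{1-d}\int_0^r s^{d-1}v_\sigma\,ds$ to handle the cross term. The only small variation is in how you get the slope bound. You use the Riccati form $(r^{d-1}u_{\rm c}')'\le r^{d-1}v_\sigma$, so monotonicity is needed only for the sign $u_{\rm c}'\ge0$, whereas the paper uses $f_\sigma(s)\le f_\sigma(r)$. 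You also bound $r\,u_{\rm c}'$ directly instead of stating $u_{\rm c}'\le C/(\sigma+r)$.
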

\begin{proof}
See Appendix~\ref{app:proof_cusp_laplacian}.
\end{proof}

The condition $g\sigma^{2-d}\le\kappa$ is indeed reasonable in the shrinking-range regime by noting that $g=g_{\rm{GP}}/N$, $2-d=-1$ or 0, and $\sigma=c/N$. 
The role of $u_{\rm c}$ can be viewed as follows. Taking 
$u_{\rm NN}(r)=\varphi_{\theta_2}(r^2)$, we have 
\[
 \mathcal Q_\sigma[u_{\rm NN}](0)
 =\frac{g}{\pi^{d/2}\sigma^d}
  -2d\,\partial_s\varphi_{\theta_2}(0)=\frac{g\sigma^{2-d}}{\pi^{d/2}}\,\sigma^{-2}-2d\,\partial_s\varphi_{\theta_2}(0).
\]
This value can scale as $\mathcal O(\sigma^{-2})$ as $\sigma\to 0$, making the learning increasingly challenging. 
The analytic factor $u_{\rm c}$ provides a short-range prior before optimization, and
Theorem~\ref{prop:cusp_laplacian} shows that the remaining
contribution is uniformly bounded in $\sigma$ and $r$.  As a result, the neural network only needs to minimize a bounded residual rather than the full two-body correlation that is increasingly difficult to fit as $\sigma \to 0$. The resulting
training benefit is examined in the ablation study of
Section~\ref{sec:ablation}.
\subsection{Local-energy evaluation and training}
\label{sec:kinetic}
\label{sec:cost}
Based on the introduced architecture for BINN, 
we deduce here an evaluation of the local-energy. According to~\eqref{eq:eloc}, the kinetic term
for a general $f_\theta:\mathbb R^{dN}\to\mathbb R$ requires
$
  \nabla_{\bm R}^2 f_\theta
  =\sum_{i=1}^N\nabla_i^2 f_\theta.$ 
For BINN, \eqref{eq:psi_factored} and~\eqref{eq:sym_split} produce the following exact decomposition:
\begin{equation}
  \nabla^2 f_\theta(\bm R)
  =\underbrace{\sum_{i=1}^N
      \nabla_i^2\chi^{\rm env}_{\theta_1}(\bm r_i)}_{\text{one-body}}
   +\underbrace{2\sum_{i<j}
      \nabla_{\rm rad}^2u_{\theta_2}(r_{ij})}_{\text{two-body}}
   +\underbrace{\nabla^2 f^{\rm TNN}_{\theta_1}(\bm R)}_{\text{neural}}.
  \label{eq:lap_decomposition}
\end{equation}
The one-body and two-body terms can be evaluated directly through
$d$-dimensional and radial derivatives, respectively. The remaining
neural term, however, is the Laplacian of
$f_{\theta_1}^{\rm TNN}:\mathbb R^{dN}\to\mathbb R$.
A common automatic-differentiation implementation obtains this
Laplacian by first computing the $dN\times dN$ coordinate Hessian
$\operatorname{Hess}_{\bm R}f_{\theta_1}^{\rm TNN}$ and then taking
its trace~\cite{li2024forward}.

To accelerate this calculation, we
exploit the pooled structure of $f_{\theta_1}^{\rm TNN}$ to evaluate
its Laplacian exactly without forming the full coordinate Hessian. Denoting 
\[
  f^{\rm TNN}_{\theta_1}=\log\sum_{m=1}^R \mathrm{e}^{S_m+c_m},
  \qquad
  S_m=\sum_{i=1}^N\varphi_m(\bm{r}_i),
  \qquad
  p_m=\frac{\mathrm{e}^{S_m+c_m}}
  {\sum_{{l^\prime}=1}^R\mathrm{e}^{S_{l^\prime}+c_{l^\prime}}},
\]
the chain rule gives
$
  \nabla_{i}f^{\rm TNN}_{\theta_1}
  =\sum_{m=1}^R p_m\,\nabla_{i}\varphi_m(\bm{r}_i).$ 
It is direct to check that the derivative of $p_m$ with respect to
$\bm r_i$ can be written using only the $R$ local orbital gradients:
\[
\nabla_{i}p_m=p_m\Bigl(\nabla_{i}\varphi_m(\bm{r}_i)
  -\sum_{{l^\prime}=1}^R p_{{l^\prime}}\,\nabla_{i}\varphi_{{l^\prime}}(\bm{r}_i)\Bigr).
\]
Consequently, taking one more derivative and summing over
$i=1,\dots,N$ gives
\begin{equation}
  \nabla^2 f^{\rm TNN}_{\theta_1}
  =\sum_{i=1}^N\Bigl[\sum_{m=1}^R p_m\bigl(\nabla_i^2\varphi_m(\bm{r}_i)
      +|\nabla_i\varphi_m(\bm{r}_i)|^2\bigr)
    -\Bigl|\sum_{m=1}^R p_m\,\nabla_i\varphi_m(\bm{r}_i)\Bigr|^2\Bigr].
  \label{eq:exact_lap}
\end{equation}
Substituting~\eqref{eq:exact_lap} into~\eqref{eq:lap_decomposition}, and
then into~\eqref{eq:eloc}, gives the local energy $E_L[\Psi_\theta]$ needed for training. 
Equation~\eqref{eq:exact_lap} retains the full configuration dependence through the weights $p_m$, but replaces the calculation of the Laplacian of a $dN$-dimensional function with $R$ different $d$-dimensional functions evaluated at $N$ points. The other two terms in~\eqref{eq:lap_decomposition} are evaluated directly:
the one-body term requires $N$ $d$-dimensional Laplacians, while the
pair term uses the radial derivatives defined in
~\eqref{eq:cusp_condition} for all $N(N-1)/2$ pairs in one vectorized
batch. The pair calculation therefore retains
$\mathcal O(N^2)$ total work and determines the overall scaling in $N$.

\begin{algorithm}[h!]
\caption{BINN method.}
\label{alg:iteration}
\begin{algorithmic}[1]
\Require Physical parameters: dimension $d$, particles $N$, external potential $v_{\rm ext}$,
  macroscopic coupling $g_{\rm GP}$, range $\sigma$; 
  Computational parameters: cell radius $\ell$, number of configurations $B$, 
  total iterations $T$
\Ensure trained network parameters $\theta=\theta^{(T)}$; output the energy \eqref{numerical energy} and the GS wave function \eqref{eq:psi_factored} with $L^2$-normalization
\State Initialization: set $g\gets g_{\rm GP}/N$; precompute $u_{\rm c}=\log f_\sigma$ on $[0,\ell]$
  from~\eqref{eq:cell_problem}; 
initialize $\theta=\theta^{(0)}$; draw initial configurations
  $\{\bm{R}^{(s)}\}_{s=1}^B$ from a Gaussian distribution 
\For{$t=0,\dots,T-1$}
  \State update $\{\bm{R}^{(s)}\}_{s=1}^B$ targeting $|\Psi_\theta|^2$ by MALA  
  \State evaluate $E_L(\bm{R}^{(s)})$ using
    \eqref{eq:eloc} and~\eqref{eq:exact_lap}
  \State update by one SR step~\eqref{eq:sr_update} for 
  $\theta^{(t+1)}$ 
\EndFor
\end{algorithmic}
\end{algorithm}
\begin{table}[h!]
\small
\centering
\caption{Peak GPU memory and wall-clock time per full training step for
  two exact implementations of the TNN Laplacian on the 3D
  shrinking-range benchmark ($B=256$, $R=32$, $16$~GB device).
  `Direct' evaluates the $dN$ diagonal second derivatives together in
  one vectorized batch without using the pooling identity; `Pooled'
  uses~\eqref{eq:exact_lap}. OOM denotes out of memory.}
\label{tab:mem_trace}
\begin{tabular}{lrrrrrrr}
\toprule
$N$ & $50$ & $60$ & $70$ & $80$ & $90$ & $100$ & $150$ \\
\midrule
Direct memory (GB) & $2.03$ & $2.92$ & $3.96$ & $5.16$ & $6.52$ & $8.04$ & OOM \\
Pooled memory (GB) & $0.37$ & $0.50$ & $0.65$ & $0.82$ & $1.01$ & $1.23$ & $2.60$ \\
\midrule
Direct time (ms) & $45$ & $58$ & $76$ & $96$ & $120$ & $146$ & / \\
Pooled time (ms) & $18$ & $21$ & $25$ & $30$ & $35$ & $42$ & $88$ \\
\bottomrule
\end{tabular}
\end{table}


Algorithm~\ref{alg:iteration} summarizes the complete procedure of the presented BINN method, including the MALA
sampling for configurations and the SR update for optimization described in
Section~\ref{sec:sr}. The detailed setup of MALA and SR in our implementation for experiments is reported in Appendix~\ref{app:computational}. 
The algorithm outputs the trained $\theta$ for the normalized network wave function $\Psi_\theta$ \eqref{eq:psi_factored} and the numerical energy $E[\Psi_\theta]$ which under VMC \eqref{VMC def} can be computed here as 
 \begin{equation}\label{numerical energy}
 E[\Psi_{\theta}]\approx\frac{1}{B}\sum_{s=1}^B E_L[\Psi_{\theta}](\bm{R}^{(s)}). 
 \end{equation}
The training process can be terminated either after a prescribed total number of steps as given in Algorithm~\ref{alg:iteration}, or by the following stopping criterion: for some $\epsilon_E,\epsilon_V>0$,
\begin{equation}\label{stop}
\begin{aligned}
\left|E[\Psi_{\theta^{(t+1)}}]-E[\Psi_{\theta^{(t)}}]\right|
&<\epsilon_E, \qquad
\operatorname{Var}\!\left[E_L[\Psi_{\theta^{(t+1)}}]\right]
<\epsilon_V,
\quad t=0,1,\ldots
\end{aligned}
\end{equation}
 We will adopt the former manner in our experiments and use \eqref{stop} as a check. 
 To assess
the practical improvement from the TNN reduction, Table~\ref{tab:mem_trace} makes 
comparisons on the memory cost and computational time of the complete training steps that differ only in evaluation of
$\nabla^2f_{\rm TNN}$: the proposed way versus the direct one. 

\section{Numerical results}
\label{sec:baseline}
This section conducts numerical tests to demonstrate the effectiveness of BINN in solving \eqref{eq:rayleigh}. We shall  validate BINN through the theoretical predictions from Section~\ref{sec:asymptotics} as well as the ablation study. The transferability of BINN to cover different external potentials will be discussed in the end.

\subsection{Evaluation protocol}
\label{sec:setup}
We first establish the common evaluation protocol used across the section.
The experiments here fix batch size $B=256$ and use the same BINN architecture:
the network
${\varphi}_{\theta_1}=(\varphi_{1},\ldots,\varphi_{R})$
has hidden width $64$ and depth $2$, with $R=32$ outputs, while
$\varphi_{\theta_2}$ has hidden width $32$ and depth $1$.
All runs\footnote{Codes are available at \url{https://github.com/kongzhizhong/Boson-informed-Neural-Network}} are executed on a single NVIDIA GeForce RTX 5070~Ti GPU with
$16$~GB memory. 

As benchmarks, let $E_{\rm MF}$ and $E_{\rm DMC}$ denote the mean-field and
DMC reference energies~\cite{foulkes2001quantum}. 
Details of the two references are specified in 
Appendix~\ref{app:computational}. Let $E_{\rm BINN}$ be the energy obtained from BINN and \(\delta_{\rm DMC}\)
denote the relative residual energy gap between
$E_{\mathrm{BINN}}$ and its DMC projection:
\[
\delta_{\rm DMC}
=
\frac{|E_{\rm BINN}/N-E_{\rm DMC}/N|}
{|E_{\rm DMC}/N|}.
\] 
Additionally, we use the  coefficient of variation of the local energy to assess the convergence of training toward GS:
\[
\mathrm{CV}(E_L)
= \frac{1}{|\bar E_L|}
\left[\frac{1}{B-1}\sum_{s=1}^{B}
\bigl(E_L(\bm{R}^{(s)})-\bar{E}_L\bigr)^2\right]^{1/2},
\qquad
\bar{E}_L := \frac{1}{B}\sum_{s=1}^B E_L(\bm{R}^{(s)}).
\]
Recall the zero-variance criterion of $\operatorname{Var}_{\rho_\theta}[E_L]$ in Section~\ref{sec:vmc}, so
$\mathrm{CV}(E_L)$ which is the relative standard deviation of the local energy should approach zero. The gap to the corresponding asymptotic limit is defined separately for each benchmark regime in Section~\ref{sec:fixed_range}. In all tables, numbers in parentheses indicate statistical uncertainties in the last digit(s). For example, $1.017(4)$ means $1.017 \pm 0.004$. 

\begin{table}[t!]
\small
\caption{Fixed-range regime test with DMC validation for a 2D harmonic trap; here
$e_{\rm Hartree}(\sigma)=3.938$, ${\rm gap}:=e_{\rm Hartree}-E_{\rm BINN}/N$.
}
\label{tab:2d_harmonic}
\begin{center}
\renewcommand{\arraystretch}{1.2}
\begin{tabular}{crrrrrr}
\toprule
$N$ & $E_{\rm MF}/N$ & $E_{\rm BINN}/N$ & $E_{\rm DMC}/N$ & $\delta_{\rm DMC}$ & gap  & $\mathrm{CV}(E_L)$ \\
\midrule
 50 & $3.9015$ & $3.4377(1)$ & $3.4366(2)$ & $0.033\%$ & $0.5003$ &  $0.0131$ \\
 75 & $3.9137$ & $3.5912(1)$ & $3.5891(1)$ & $0.059\%$ & $0.3468$ &  $0.00980$ \\
100 & $3.9197$ & $3.6733(1)$ & $3.6723(1)$ & $0.028\%$ & $0.2647$ &  $0.00788$ \\
125 & $3.9234$ & $3.7239(1)$ & $3.7217(1)$ & $0.059\%$ & $0.2141$ &  $0.00662$ \\
150 & $3.9258$ & $3.7584(1)$ & $3.7581(1)$ & $0.009\%$ & $0.1796$ &  $0.00572$ \\
175 & $3.9275$ & $3.7833(1)$ & $3.7820(1)$ & $0.034\%$ & $0.1547$ &  $0.00505$ \\
200 & $3.9288$ & $3.8021(1)$ & $3.8023(1)$ & $0.005\%$ & $0.1359$ &  $0.00453$ \\
\bottomrule
\end{tabular}
\end{center}
\end{table}
\begin{figure}[t!]
\centering
\includegraphics[width=0.72\linewidth]{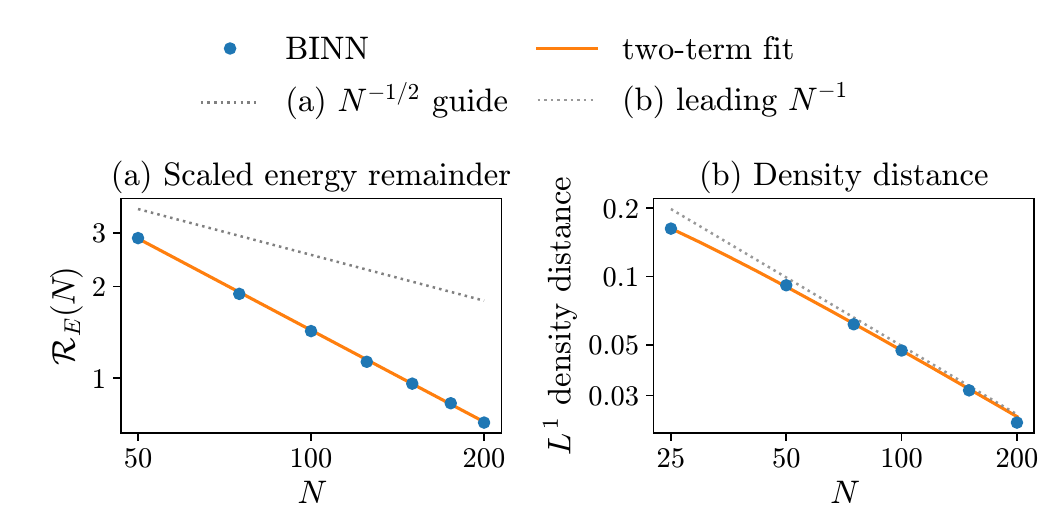}
\caption{Convergence to the large-$N$ Hartree limit for the fixed-range
2D harmonic trap, shown on log--log axes;
$e_{\rm Hartree}(\sigma)=3.938$.
\textbf{(a)} Scaled energy remainder
$\mathcal{R}_E(N)
:=N|e_{\rm Hartree}(\sigma)-E_{\rm BINN}/N-C_H/N|$, with $C_H=27.9$
fixed. Circular markers denote the BINN results, the solid curve is the
observed $N^{-1.00}$ power-law fit, and the dotted line is the
$N^{-1/2}$ guide predicted by \eqref{eq:Ecorr_trans}.
\textbf{(b)} $L^1$ distance (norm in theorems) between the BINN density $\rho_N$ and the
limiting Hartree density $\rho_{\rm H}$. The solid curve
shows the fit using $A_\rho/N+B_\rho/N^2$, and the dotted line its leading
term $A_\rho/N$; circular markers denote the BINN results.}
\label{fig:harm_2d_twopanel}
\end{figure}

\subsection{Accuracy and asymptotic benchmarks}
\label{sec:fixed_range}
We begin to validate BINN by comparing BINN energy with DMC, testing
convergence toward the Hartree and GP limits, and pushing the
shrinking-range calculation in 3D to \(N=350\).

\subsubsection{Fixed-range regime}
\label{sec:2d_harmonic} 
For the fixed-range case, we study the 2D isotropic
harmonic trap~\eqref{harmonic oscillator trapping potential} with
$\omega=1$, $\sigma=0.2$, $g=g_{\rm GP}/N$ and $g_{\rm GP}=100$. Table~\ref{tab:2d_harmonic}
shows that the relative energy gap between BINN and DMC is below $0.06\%$ for all $N$,
indicating that BINN reaches classical DMC-level accuracy. In addition, the local-energy coefficient of variation remains small across all system sizes, reflecting the reliability of our results. Meanwhile, the energy difference per-particle $e_{\rm Hartree}-E_{\rm BINN}/N$ (also the density errors in Fig.~\ref{fig:harm_2d_twopanel}(b)) exhibits an 
$\mathcal{O}(N^{-1})$ decay, consistent with the scaling
behavior \eqref{eq:Ecorr_expand}.   Moreover, we compute via~\cite{grech2013excitation,lewin2015bogoliubov} the coefficient  $C_H=27.9$ for~\eqref{eq:Ecorr_trans} and  Fig.~\ref{fig:harm_2d_twopanel}
tests the convergence. The observed remainder decays with a power-law exponent of
approximately \(-1.00\), faster than the expected $N^{-1/2}$ rate in \eqref{eq:Ecorr_trans}, indicating potential theoretical refinements. On the other hand, the  theoretical differences (e.g., $E_{\rm{MF}},\rm{gap}$ in Table~\ref{tab:2d_harmonic}) indicate limitations of the approximated models.

\subsubsection{Shrinking-range regime}
\label{sec:gp_limit}
Consider next the 3D shrinking-range regime as in 
Theorem~\ref{thm:lieb_seiringer}, which is more stringent due to the coupling of dimensionality and near-contact effects through $\sigma=c/N$. We fix $c=10$ and adopt the same setup as above but with \eqref{harmonic oscillator trapping potential} in 3D.

\begin{table}[t!]
\small
\setlength{\tabcolsep}{4pt}
\caption{Shrinking-range GP scaling $\sigma=c/N$ ($c=10$) on the 3D
harmonic trap; $e_{\rm GP}=2.506$, the rescaled
scattering length $N\,a(v_{\sigma})=4.919$ for every $N$,
$\mathrm{dis}_{L^1}=\|\rho_N-|\varphi_{\rm GP}|^2\|_{L^1}$. Wall (s) is the full 8000-step SR training time
and mem (GB) the peak GPU memory per training step.}
\label{tab:contact_exact}
\begin{center}
\renewcommand{\arraystretch}{1.2}
\begin{tabular}{rrrrrrrrr}
\toprule
$N$ & $\sigma$ & $E_{\rm BINN}/N$ & $E_{\rm DMC}/N$ & $\delta_{\rm DMC}$ & $\mathrm{dis}_{L^1}$ & $\mathrm{CV}(E_L)$ & wall (s) & mem (GB) \\
\midrule
 $50$ & $0.200$ & $2.56585(8)$ & $2.56561(5)$ & $0.009\%$ & $0.0474$ & $0.0103$ & $145$ & $0.36$ \\
 $60$ & $0.167$ & $2.55934(8)$ & $2.55901(5)$ & $0.013\%$ & $0.0431$ & $0.00914$ & $164$ & $0.50$ \\
 $70$ & $0.143$ & $2.55402(6)$ & $2.55353(4)$ & $0.019\%$ & $0.0378$ & $0.00830$ & $199$ & $0.65$ \\
 $80$ & $0.125$ & $2.54950(7)$ & $2.54900(4)$ & $0.020\%$ & $0.0352$ & $0.00767$ & $237$ & $0.82$ \\
 $90$ & $0.111$ & $2.54577(6)$ & $2.54574(4)$ & $0.001\%$ & $0.0317$ & $0.00720$ & $281$ & $1.01$ \\
$100$ & $0.100$ & $2.54274(5)$ & $2.54269(4)$ & $0.002\%$ & $0.0305$ & $0.00666$ & $332$ & $1.23$ \\
$150$ & $0.067$ & $2.53256(4)$ & $2.53234(2)$ & $0.009\%$ & $0.0207$ & $0.00521$ & $704$ & $2.60$ \\
\bottomrule
\end{tabular}
\end{center}
\end{table}

\begin{figure}[t!]
\centering
\includegraphics[width=0.72\linewidth]{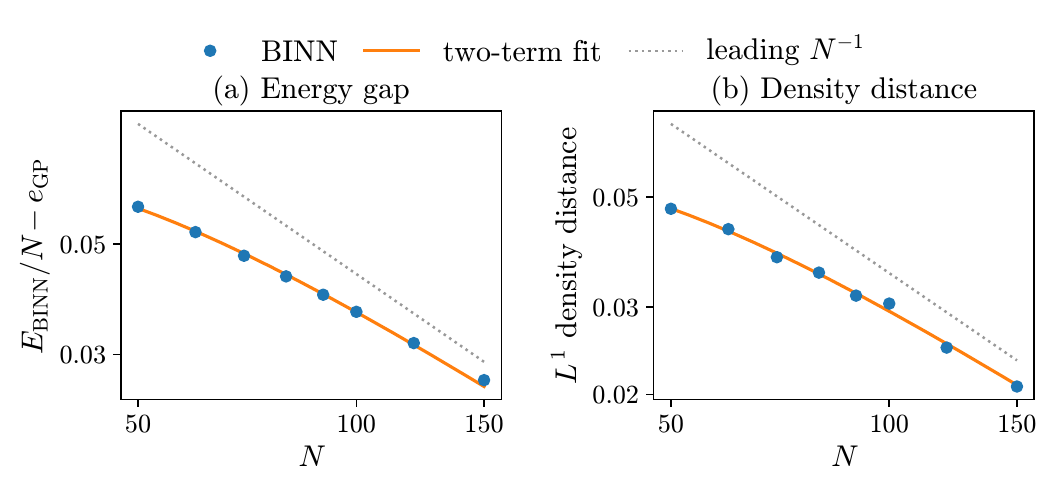}
\caption{Convergence to the large-$N$ GP limit for the 3D harmonic trap under shrinking-range scaling ($\sigma=10/N$, $N\le150$, log--log scale; $e_{\rm GP}=2.506$). \textbf{(a)} Per-particle energy excess $E_{\rm BINN}/N-e_{\rm GP}$. \textbf{(b)} $L^1$ distance between the BINN density $\rho_N$ and the limiting GP density $\vert{}\varphi_{\rm GP}\vert{}^2$. In both panels, circular markers denote BINN results, while solid and dotted curves represent the two-term fit ($D_1/N+b_G/N^2$ for panel a and $A_\rho/N+B_\rho/N^2$ for panel b) and its leading $\mathcal{O}(N^{-1})$ term, respectively.}
\label{fig:contact_exact}
\end{figure}

As shown in Table~\ref{tab:contact_exact}, the energies yielded by BINN and DMC agree within $0.02\%$ across the entire range from $N = 50$ to $150$. Crucially, BINN preserves DMC-level accuracy even as the system size grows and the interaction range shrinks, highlighting its robust performance in high-dimensional, near-contact regimes. With increasing $N$, both the per-particle energy gap and the density error $\Vert{}\rho_N - \vert{}\varphi_{\rm GP}\vert{}^2\Vert{}_{L^1}$ show an $\mathcal{O}(N^{-1})$ decay as plotted in Fig. \ref{fig:contact_exact}, consistent with Theorem~\ref{thm:lieb_seiringer} and the asymptotic result \eqref{eq:bogoliubov_scaling}. Furthermore, the wall-clock time and memory consumption reported in Table~\ref{tab:contact_exact} validate the practical $\mathcal{O}(N^2)$ scaling discussed in Section~\ref{sec:cost}.

\subsubsection{Stress test in 3D}
\label{sec:reach}

We further test robustness of BINN by pushing the above 3D shrinking-range
calculation to \(N=350\), where \(\sigma=c/N\approx0.029\). On this scale,
BINN gives \(E_{\rm BINN}/N=2.51879(2)\), in agreement with the DMC value
\(E_{\rm DMC}/N=2.51859(3)\), with a relative discrepancy
\(\delta_{\rm DMC}=0.008\%\) and the density error in \(L^1\) norm is \(0.012\). 
The method thus maintains DMC-level accuracy even when the configuration
dimension reaches \(3N=1050\) and the interaction range becomes very small. 

Optimization also remains stable at this scale, as shown by the training
dynamics in Fig.~\ref{fig:reach_traj}. Starting from a stiff initial state,
the per-particle energy reaches a stable plateau within about 300
stochastic-reconfiguration steps, while the local-energy coefficient of
variation decreases rapidly and stabilizes around $0.003$.

\begin{figure}[t!]
\centering
\includegraphics[width=\linewidth]{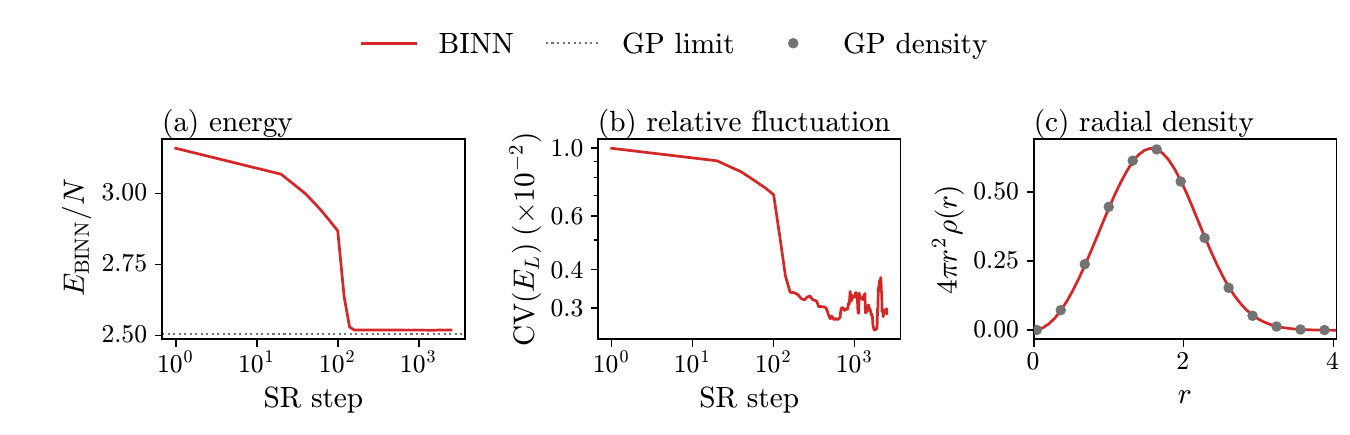}
\caption{BINN training and the converged density at the reach limit
$N=350$ ($\sigma\approx0.029$) for the 3D shrinking-range benchmark.
\textbf{(a)} Per-particle energy $E_{\rm BINN}/N$ versus SR step; the
gray dotted line marks the GP limit $e_{\rm GP}=2.506$.
\textbf{(b)} Local-energy coefficient of variation
$\mathrm{CV}(E_L)$ versus SR step.
\textbf{(c)} Radial mass distribution $4\pi r^2\rho(r)$ (here $\rho(r):=\frac{1}{4\pi r^2}
\int_{|\mathbf r|=r}\rho_N(\mathbf r)\,dS(\mathbf r)$) of the
converged BINN state (red curve), compared with the GP reference
(gray markers).}
\label{fig:reach_traj}
\end{figure}

\subsection{Ablation study}
\label{sec:ablation}

To further justify the architectural design of BINN, this section presents a series of ablation studies to validate the effectiveness of the cell Jastrow factor, the potential-aware envelope, and the TNN architecture. For this purpose, we consider a 2D problem with $N=100,\,\sigma = 0.2$ and an optical lattice external potential:
\[
v_{\rm ext}(\bm r)
=
V_0\left[\sin^2\!\left({\pi r_1}\right)+\sin^2\!\left({\pi r_2}\right)\right]
+\frac{1}{2}\lvert\bm r\rvert^2,
\qquad
\bm r=(r_1,r_2),\quad V_0=5.
\]


\textbf{Cell Jastrow factor.} To justify the introduction of $u_{\rm c}$ in \eqref{eq:jas_expression}, we compare the complete BINN (denoted as `Full') against a baseline lacking this analytical cell factor (denoted as `No-$u_{\rm c}$'). Furthermore, instead of using $\varphi_{\theta_2}(r^2)$, we evaluate an alternative configuration $u_{\rm c}+\varphi_{\theta_2}(r)$ (denoted as `MLP($r$)'). This validates the effectiveness of our squared-input feature.

\textbf{Potential-aware envelope.} To examine the necessity of the potential-aware envelope, we evaluate a variant of BINN (denoted as `No envelope') setting  $\chi^{\rm env}_{\theta_1}(\mathbf{r})\equiv0$ in \eqref{eq:sym_split}. Additionally, we compare with another variant (denoted as `No-$V_{\rm ext}$ term') that adopts simply
$\chi^{\rm env}_{\theta_1}(\mathbf{r})=-\alpha|\mathbf{r}|^2$ ($\alpha$ trainable) inspired by \cite{pfau2020ab} instead of \eqref{eq:envelope}. 

\begin{figure}[t!]
\centering
\includegraphics[width=\linewidth]{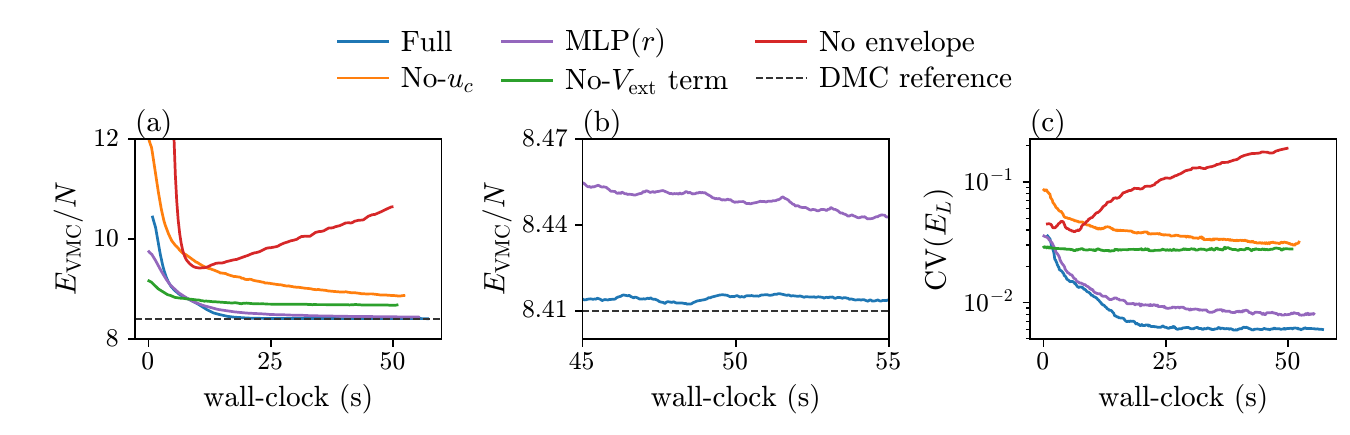}
\caption{Training dynamics of the BINN ablations on the 2D optical
lattice ($N=100$, $V_0=5$), plotted against wall-clock time.
\textbf{(a)} Per-particle VMC energies of all BINN variants.
\textbf{(b)} Late-time/zoom-in comparison of Full and MLP($r$).
\textbf{(c)} Local-energy coefficient of variation
$\mathrm{CV}(E_L)$, shown on a logarithmic scale. The Fully connected and
DeepSets trajectories lie outside the displayed ranges; their final
values are reported in Table~\ref{tab:ablation}.}
\label{fig:ablation_curves}
\end{figure}

\begin{table}[t!]
\small
\caption{Results of BINN ablations and generic neural baselines on the 2D optical lattice
($N=100$, $V_0=5$); here the Hartree and DMC reference energies are
\(E_{\rm MF}/N=8.657\) and \(E_{\rm DMC}/N=8.410\), respectively; $\Gamma_{\rm CV}=\mathrm{CV}_{\rm test}(E_L)/
\mathrm{CV}_{\rm Full}(E_L)$ compares the final local-energy coefficient
of variation of each variant with that of the complete BINN.}
\label{tab:ablation}
\begin{center}
\renewcommand{\arraystretch}{1.18}
\setlength{\tabcolsep}{3.5pt}
\footnotesize
\begin{tabular}{@{}lccccccc@{}}
\toprule
\multirow{2}{*}{Metric}
& \multirow{2}{*}{Full}
& \multirow{2}{*}{No-$u_{\rm c}$}
& \multirow{2}{*}{MLP($r$)}
& No-$V_{\rm ext}$
& No
& Fully
& \multirow{2}{*}{DeepSets} \\
& & & & term & envelope & connected & \\
\midrule
$E_{\rm VMC}/N$ & $\mathbf{8.415}$ & $8.866$ & $8.447$ & $8.682$ & $10.65$ & $51.0$ & $51.4$ \\
$\Gamma_{\rm CV}$ & $\mathbf{1.0}$ & $4.7$ & $1.3$ & $4.5$ & $31$ & $19$ & $19$ \\
\bottomrule
\end{tabular}
\end{center}
\end{table}

\textbf{Generic neural baselines.} To assess the effectiveness of the overall framework, we compare BINN with two baseline architectures: a fully connected network and DeepSets~\cite{pescia2022neural},  denoted by “Fully connected” and “DeepSets”, respectively. Neither baseline incorporates the Jastrow factor or the envelope. The results highlight the importance of our tailored structural design of BINN.

The complete BINN model reaches $E_{\rm BINN}/N=8.415$, close to the DMC
reference $E_{\rm DMC}/N=8.410$ and below the Hartree value
$E_{\rm MF}/N=8.657$. To compare local-energy fluctuations, we define
$
\Gamma_{\rm CV}
:=\mathrm{CV}_{\rm test}(E_L)/{\mathrm{CV}_{\rm Full}(E_L)},$
where $\mathrm{CV}_{\rm test}(E_L)$ is the final coefficient of variation
for the variant in the corresponding table column and
$\mathrm{CV}_{\rm Full}(E_L)$ is that of the complete BINN. For convenience, the VMC energies of all variants and neural baselines below are collectively denoted by \(E_{\rm VMC}\). As shown in Fig.~\ref{fig:ablation_curves} and Table~\ref{tab:ablation}, removing any of the key components of BINN leads to increase in both \(E_{\rm VMC}\) and \(\Gamma_{\rm CV}\), thereby degrading overall performance. The milder increase of MLP($r$) isolates the
additional regularity supplied by the squared radial input.

\subsection{Application to various trappings}
\label{sec:geometry_stress}
Last but not least, 
we apply BINN to several typical trapping scenarios in 2D. That is to set $v_{\mathrm{ext}}$ as the following four functions with fixed $N=50$, $g_{\rm GP}=100$ and  $\sigma=0.2$: 
\begin{enumerate}[label=(\alph*),leftmargin=*,itemsep=2pt]
\item Mexican-hat ring: $v_{\rm ext}(\bm{r})=-4|\bm{r}|^2+\tfrac12|\bm{r}|^4$;
\item Deep optical lattice: $v_{\rm ext}(\bm{r})=V_0\sum_{k=1}^{2}\sin^2({\pi r_k})+\tfrac12|\bm{r}|^2$ with $V_0=10$;
\item Double well: $v_{\rm ext}(\bm{r})=\tfrac12(r_1^2-x_0^2)^2+\tfrac12 r_2^2$ with $x_0=1.68$;
\item Smooth optical box: $v_{\rm ext}(\bm{r})=30\sum_{k=1}^{2}(r_k/L_{box})^{8}$ with $L_{box}=3$.
\end{enumerate}

\begin{figure}[t!]
\centering
\includegraphics[width=0.9\linewidth]{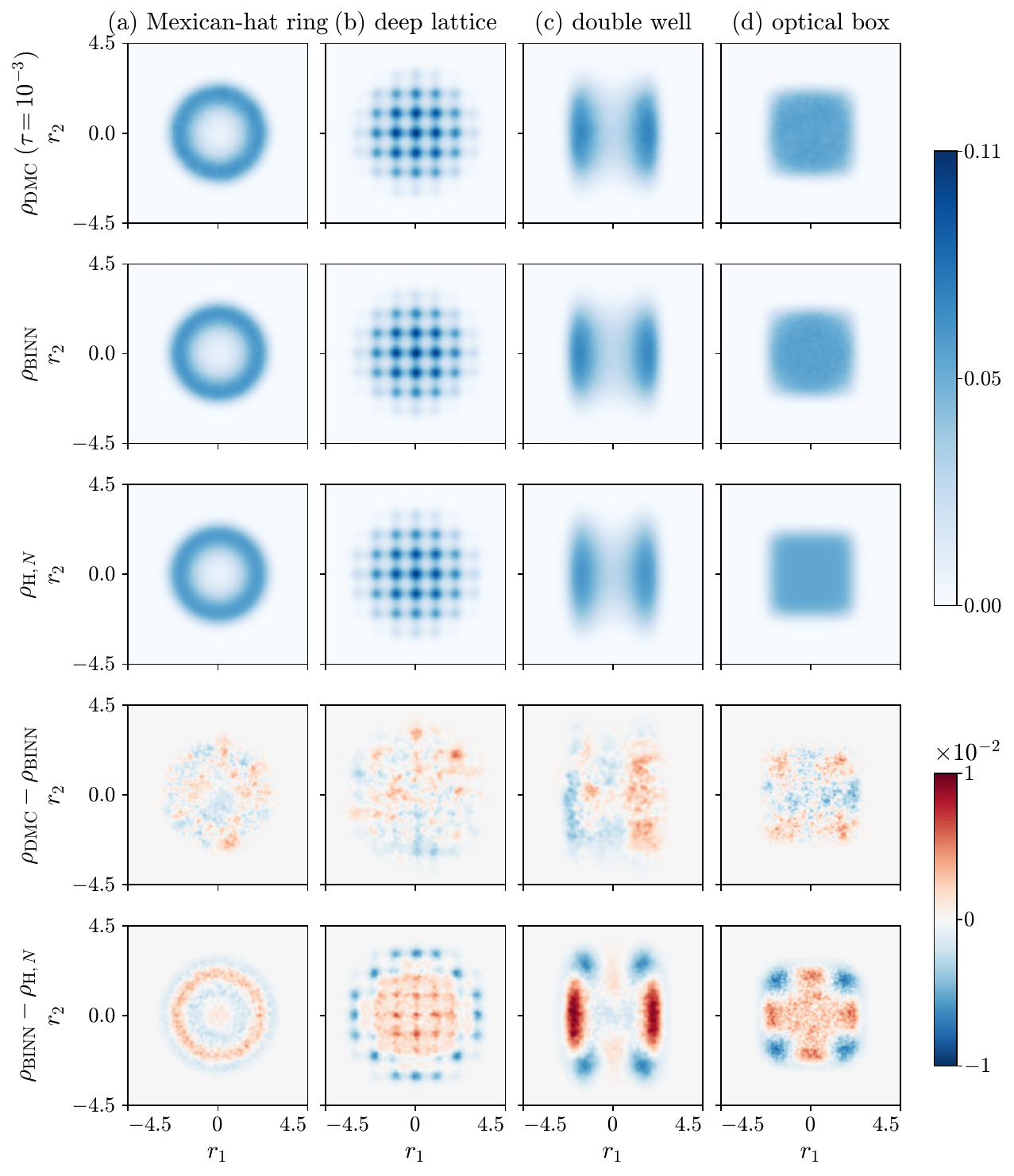}
\caption{Density function distributions for trapping potentials given in   
each column. The first two rows 
show the single-particle densities $\rho_{\rm DMC},\,\rho_{\rm BINN}$ computed from \eqref{single-partical density} with GS wave functions of DMC and BINN.
The third row shows the finite-$N$ Hartree density
$\rho_{{\rm H},N}:=|\varphi_{{\rm H},N}|^2$, where
$\varphi_{{\rm H},N}$ is a normalized minimizer of $E_{\rm {MF}}$ \eqref{E_MF def}. The last two rows show the differences $\rho_{\rm DMC}-\rho_{\rm BINN}$
and $\rho_{\rm BINN}-\rho_{{\rm H},N}$.}
\label{fig:geometry_combined}
\end{figure}

Across the four cases, the relative DMC discrepancy in energy \(\delta_{\rm DMC}=\)
\(0.24\%\), \(0.18\%\), \(0.59\%\), and \(0.38\%\) for the ring, lattice,
double-well and box traps, respectively. These sub-percent discrepancies illustrate the generalizability of BINN across different trapping potentials and again justify its uniform design that inherently encodes the trapping environment. Fig.~\ref{fig:geometry_combined} displays the 
spatial distributions of density functions of DMC, BINN and the finite-$N$ Hartree
density. The results further demonstrate the  capability of BINN in applications to different scenarios, where spatially macroscopic shapes as well as interaction-induced correlations are well captured. 

\section{Conclusion}
\label{sec:conclusion}
We introduced the Boson-informed neural network (BINN), a novel and robust framework designed for variational Monte Carlo computation of ground states for interacting bosons in continuous space. By explicitly decomposing the system's dominant physical scales through a Symmetry-Jastrow factorization, BINN effectively separates microscopic interaction stiffness via a cell Jastrow factor from macroscopic confinement encoded in a potential-aware envelope. This architectural design allows the neural network to capture the smoother many-body residual. Extensive numerical tests in both two and three dimensions demonstrate that BINN achieves state-of-the-art accuracy, approaching the Hartree and Gross-Pitaevskii limits. Notably, BINN exhibits exceptional performance in the challenging shrinking-range regime, maintaining Diffusion Monte Carlo level accuracy for particle numbers up to 350 in three dimensions, with a relative energy discrepancy as low as 0.008\%. Furthermore, ablation studies and applications across various trapping potentials—including ring-shaped, lattice, double-well, and box traps—underscore the necessity of our proposed architectural components and the robust generalizability of BINN to diverse physical scenarios.

\appendix
\section{Proof of Theorem~\ref{prop:cusp_laplacian}}
\label{app:proof_cusp_laplacian}

\begin{proof}
Throughout, $\nabla_{\rm rad}^2=\partial_r^2+\tfrac{d-1}{r}\partial_r$,
and $\mathcal{Q}_\sigma[u_{\rm c}]=\nu_\sigma$ on $[0,\ell]$
by~\eqref{eq:cell_defect}. Moreover $f_\sigma>0$ on $[0,\ell]$, since
the principal eigenfunction is nodeless and carries the normalization
$f_\sigma(\ell)=1$, as in the construction of
Section~\ref{sec:cusp_jastrow}.
As the principal Neumann eigenvalue of
$-\nabla_{\rm rad}^2+v_\sigma$ on $B_\ell$, $\nu_\sigma$ admits the
variational characterization
\[
  \nu_\sigma
  =\min_{0\neq f\in H^1(B_\ell)}
   \frac{\int_{B_\ell}\bigl(|\nabla f|^2+v_\sigma f^2\bigr)}
        {\int_{B_\ell}f^2} .
\]
Both terms in the numerator are nonnegative because $v_\sigma\ge0$, so
$\nu_\sigma\ge0$. Evaluating the quotient at the test function
$f\equiv1$, for which $\nabla f=0$, gives
\[
  \nu_\sigma\;\le\;\frac{1}{|B_\ell|}\int_{B_\ell}v_\sigma
  \;\le\;\frac{1}{|B_\ell|}\int_{\mathbb{R}^d}v_\sigma
  \;=\;\frac{g}{|B_\ell|},
\]
where the second inequality extends the integral of the nonnegative
$v_\sigma$ from $B_\ell$ to $\mathbb{R}^d$ and the last equality is the
normalization of the pair potential~\eqref{eq:pair_pot}.
Since $g=(g\sigma^{2-d})\,\sigma^{d-2}\le\kappa\max(1,\sigma_0^{d-2})$,
both $g$ and $\nu_\sigma$ are bounded uniformly in $\sigma$.

We next show that $f_\sigma$ is nondecreasing. In divergence form, the
cell equation reads
$(r^{d-1}f_\sigma')'=r^{d-1}(v_\sigma-\nu_\sigma)f_\sigma$, and
$h:=r^{d-1}f_\sigma'$ vanishes at both endpoints by the Neumann
conditions. Since $v_\sigma$ is strictly decreasing in $r$ and
$f_\sigma>0$, the derivative $h'$ changes sign at most once on
$(0,\ell)$, from nonnegative to nonpositive; a function vanishing at
both endpoints whose derivative follows this sign pattern is
nonnegative throughout. Hence, $f_\sigma'\ge0$ and
$0<f_\sigma\le f_\sigma(\ell)=1$ on $[0,\ell]$.

Monotonicity gives the slope bound directly. For $r\in(0,\ell]$,
\[
  0\;\le\;u_{\rm c}'(r)=\frac{h(r)}{r^{d-1}f_\sigma(r)},
  \qquad
  h(r)=\int_0^r s^{d-1}(v_\sigma-\nu_\sigma)f_\sigma\,ds
  \;\le\;f_\sigma(r)\int_0^r s^{d-1}v_\sigma\,ds ,
\]
where the last step drops the nonpositive $-\nu_\sigma$ contribution and
uses $f_\sigma(s)\le f_\sigma(r)$ for $s\le r$; dividing by
$r^{d-1}f_\sigma(r)$ leaves
\[
  u_{\rm c}'(r)\;\le\;\frac{1}{r^{d-1}}\int_0^r s^{d-1}v_\sigma(s)\,ds .
\]
For \(r\le\sigma\), using
\(v_\sigma(s)\le v_\sigma(0)=g\sigma^{-d}\pi^{-d/2}\) gives
\(
  u_{\rm c}'(r)\le \frac{v_\sigma(0)r}{d}
  \le \frac{\pi^{-d/2}\kappa}{d\,\sigma},
\)
where we used \(r\le\sigma\) and \(g\sigma^{2-d}\le\kappa\). For \(r\ge\sigma\), the normalization
\(\int_{\mathbb{R}^d}v_\sigma=g\) from~\eqref{eq:pair_pot}
can be written in polar coordinates as
\(g=|\mathbb{S}^{d-1}|\int_0^\infty s^{d-1}v_\sigma(s)\,ds\),
we obtain, for \(r\ge\sigma\),
\[
u_{\rm c}'(r)
\le r^{1-d}\int_0^r s^{d-1}v_\sigma(s)\,ds
\le \frac{g\,r^{1-d}}{|\mathbb{S}^{d-1}|}
\le \frac{\kappa}{|\mathbb{S}^{d-1}|\,r},
\]
where \( |\mathbb{S}^{d-1}| \) denotes the surface area of the unit
sphere and the last inequality follows from
\(g r^{2-d}\le g\sigma^{2-d}\le\kappa\) for \(d\ge2\).
since \(r\ge\sigma\) and \(d\ge2\) imply
\(g r^{2-d}\le g\sigma^{2-d}\le\kappa\).
If $r\le\sigma$, then $\sigma+r\le2\sigma$; if $r\ge\sigma$, then
$\sigma+r\le2r$. Hence, the two estimates combine to
$0\le u_{\rm c}'(r)\le C/(\sigma+r)$ on
$(0,\ell]$, with $C$ depending only on $d$ and $\kappa$.

With $u_{\rm NN}=\varphi_{\theta_2}(r^2)$ and we have
\(
  \nabla_{\rm rad}^2 u_{\rm NN}=2d\,\varphi_{\theta_2}'(r^2)+4r^2\varphi_{\theta_2}''(r^2).
\)
By the definition of $M_\Theta$, both expressions are bounded uniformly for
$\theta_2\in\Theta$ and $0\le r\le\ell$, and
$|u_{\rm NN}'(r)|\le 2M_\Theta r$. Expanding $\mathcal{Q}_\sigma$ at $u=u_{\rm c}+u_{\rm NN}$
and using the cell defect,
\[
  \mathcal{Q}_\sigma[u_{\rm c}+u_{\rm NN}]
  =\nu_\sigma-\nabla_{\rm rad}^2 u_{\rm NN}-2\,u_{\rm c}'\,u_{\rm NN}'-|u_{\rm NN}'|^2 .
\]
On $[0,\ell]$ the eigenvalue and network estimates bound $\nu_\sigma$,
$\nabla_{\rm rad}^2 u_{\rm NN}$, and $|u_{\rm NN}'|^2\le 4M_\Theta^2 r^2$,
while the slope bound controls the cross term,
\[
  |u_{\rm c}'\,u_{\rm NN}'|\le\frac{C}{\sigma+r}\,2M_\Theta r
  =2CM_\Theta\,\frac{r}{\sigma+r}\le 2CM_\Theta ,
\]
where the even feature $r^2$ supplies the factor $r$ that offsets the
$1/(\sigma+r)$ growth of $u_{\rm c}'$. Collecting the four terms by the
triangle inequality,
\[
  \sup_{0\le r\le\ell}
  \bigl|\mathcal{Q}_\sigma[u_{\rm c}+u_{\rm NN}](r)\bigr|
  \;\le\;
  \nu_\sigma
  +\sup_{0\le r\le\ell}\bigl|\nabla_{\rm rad}^2u_{\rm NN}\bigr|
  +2\sup_{0\le r\le\ell}\bigl|u_{\rm c}'\,u_{\rm NN}'\bigr|
  +\sup_{0\le r\le\ell}\bigl|u_{\rm NN}'\bigr|^2 ,
\]
and each supremum on the right has been bounded above by a constant
depending only on $M_\Theta$, $\ell$, $\kappa$, $\sigma_0$, and $d$, uniformly
in $0<\sigma\le\sigma_0$ and $\theta_2\in\Theta$; this is the stated bound
$C$.
\end{proof}

\section{Implementation details for experiments}
\label{app:computational}
We provide here the extra 
implementation details of our numerical experiments. %

\subsection{Network architecture and training protocol}
We count only hidden affine layers when reporting network depth; the output
layer is not included.  All hidden layers use the $\tanh$ activation, and all
output layers are affine.  Table~\ref{tab:binn_architecture} collects the
network architecture and training settings used in the reported BINN
experiments. The
MLP is shared over all particle indices $i$.
\begin{table}[h!]
\small
\centering
\caption{BINN architecture and training protocol; $a\to b\to c$ lists
successive layer widths.}
\label{tab:binn_architecture}
\begin{tabular}{lcc}
\toprule
Parameter or component & \multicolumn{2}{c}{Value} \\
\midrule
TNN log-orbital $\varphi_{\theta_1}$
  & \multicolumn{2}{c}{$d\to64\to64\to 32$} \\
neural Jastrow residual $\varphi_{\theta_2}$
  & \multicolumn{2}{c}{$1\to32\to1$}\\
cell radius $\ell$             & \multicolumn{2}{c}{$1.0$} \\
walkers $B$                  & \multicolumn{2}{c}{$256$} \\
MALA moves per SR step       & \multicolumn{2}{c}{$5$} \\
\midrule
                             & 2D fixed-range & 3D shrinking-range \\
\cmidrule(lr){2-3}
SR step size $\eta_t$        & $0.018/(1+t/400)$ & $0.016/(1+t/400)$ \\
training steps $t$              & $2500$            & $8000$ \\
\bottomrule
\end{tabular}
\end{table}


\subsection{DMC projection validation}
DMC is used only as a post-training projection check, with the converged
BINN state as the importance-sampling guide~\cite{foulkes2001quantum}; the
projection is continued to imaginary time $T_{DMC}=8$ with $\tau$  the
imaginary-time step and $E_{\rm DMC}(\tau)$ the corresponding energy. 
Tables~\ref{tab:2d_harmonic} and~\ref{tab:contact_exact} report $E_{\rm DMC}(0)$ as $E_{\rm DMC}$, where $E_{\rm DMC}(0)=E_{\rm DMC}(\tau)+\mathcal{O}(\tau)$ is given  by least squares using the values at
$\tau\in\{0.004,0.002,0.001\}$. Sections~\ref{sec:reach}
and~\ref{sec:geometry_stress} use only $\tau=0.001$ and $0.002$, giving
$E_{\rm DMC}(0)=2E_{\rm DMC}(0.001)-E_{\rm DMC}(0.002)$.

\subsection{Mean-field references}
The 2D Hartree and 3D contact-GP references in
Section~\ref{sec:fixed_range} are computed by an independent  imaginary-time Fourier-spectral evolution \cite{BaoCai}, used only for comparison.


\bibliographystyle{siamplain}
\bibliography{refs}

\end{document}